\documentclass{sig-alternate}
\usepackage[latin1]{inputenc}
\usepackage{url}
\usepackage{stmaryrd}
\usepackage{float}
\usepackage{pgf}
\usepackage{subfig}
\usepackage{multirow}
\usepackage{xspace}
\usepackage{hyperref}
\usepackage[noend]{algorithmic}
\usepackage{algorithm}
\usepackage{amsmath}
\usepackage{calc} 


\usepackage{color}
\usepackage{bbm}
\usepackage{xspace}
 
\newtheorem{theorem}{Theorem}
\newtheorem{definition}[theorem]{Definition}
\newtheorem{lemma}[theorem]{Lemma}
\newtheorem{lemma*}{Lemma}
{}
{}

\newtheorem{proposition}[theorem]{Proposition}

\newcommand {\C}   {\mathbb C}

\newcommand {\D}   {\mathbb D}

\newcommand {\Z}   {\mathbb Z}

\newcommand {\Q}   {\mathbb Q}

\newcommand{\calT}   {\mathcal T}

\newcommand{\OO}{\ensuremath{{{O}}}}

\newcommand{\sO}{\ensuremath{\widetilde{{O}}}}
\newcommand{\sOB}{\ensuremath{\widetilde{{O}}_B}}



\newcommand{\makeremark}[2]{
  \newcommand{#1}[1]
    {
    \color{blue}
     $\longrightarrow$ \textsc{#2: }
     ##1
     $\longleftarrow$
    \color{black}
    }
}    


\newcommand{\blue}[1]{\color{blue}#1\color{black}\xspace}

\renewcommand{\blue}[1]{#1\xspace}

\makeremark{\SL}{SL}
\makeremark{\FR}{Fabrice says}
\makeremark{\MP}{Marc says}
\makeremark{\ET}{Elias says}
\makeremark{\LP}{Luis shouts}
\makeremark{\YB}{Yacine says}

\definecolor{1ST}{rgb}{1,0,0}
\definecolor{2ND}{rgb}{1,0.5,0}
\definecolor{3RD}{rgb}{1,0,1}

\newcommand{\shutup}[1]{}

\renewcommand{\leq}{\leqslant}  
\renewcommand{\geq}{\geqslant}

\makeatletter
\def\cramped                           
 {\parskip0pt\@topsep0pt       
  \itemsep0pt\parsep0pt
}        
\makeatother

\usepackage{stmaryrd}

    
\usepackage{verbatim}

\title{Improved  algorithm for computing separating linear forms for bivariate systems}
\numberofauthors{5}
\author{
\alignauthor
Yacine Bouzidi\\
       \affaddr{INRIA Nancy Grand Est}\\
       \affaddr{LORIA, Nancy, France}\\
        \email{Yacine.Bouzidi@inria.fr}
\alignauthor
Sylvain Lazard\\
      \affaddr{INRIA Nancy Grand Est}\\
       \affaddr{LORIA, Nancy, France}\\
        \email{Sylvain.Lazard@inria.fr}
\alignauthor 
Guillaume Moroz\\
      \affaddr{INRIA Nancy Grand Est}\\
       \affaddr{LORIA, Nancy, France}\\
         \email{Guillaume.Moroz@inria.fr}
\and
\alignauthor 
Marc Pouget\\
      \affaddr{INRIA Nancy Grand Est}\\
       \affaddr{LORIA, Nancy, France}\\
         \email{Marc.Pouget@inria.fr}
\alignauthor 
Fabrice Rouillier\\
       \affaddr{INRIA Paris-Rocquencourt  IMJ, 
Paris, France}\\
       \email{Fabrice.Rouillier@inria.fr}
}

\begin{document}
\conferenceinfo{ISSAC'14}{July 23--25, 2014, Kobe, Japan.}
\CopyrightYear{2014}
\crdata{978-1-4503-2501-1/14/07}
\maketitle
\begin{abstract}
We address the problem of computing a  linear separating form of  a system of two bivariate
polynomials with integer coefficients, that is a linear combination of the variables that takes
different values when evaluated at the distinct  solutions of the system.
The computation of such linear forms is at the core of most
  algorithms that solve algebraic systems by computing rational
  parameterizations of the solutions and this  is the bottleneck of these algorithms in terms of worst-case bit
  complexity.
We present for this problem a new algorithm of worst-case bit complexity $\sOB(d^7+d^6\tau)$
where $d$ and $\tau$ denote respectively the maximum degree and bitsize of the input
(and where $\sO$ refers to the complexity where
  polylogarithmic factors are omitted and $O_B$ refers to the bit complexity).
This
algorithm simplifies and  decreases by a factor $d$ the worst-case
bit complexity presented for this problem  by Bouzidi et al.
\blue{\cite{bouzidiJSC2014a}}.
This algorithm also yields, for this problem, a probabilistic Las-Vegas algorithm  of expected bit complexity  $\sOB(d^5+d^4\tau)$.
\end{abstract}



\section{Introduction}\label{sec:intro}

A classical approach for solving a system of
polynomials with a finite number of solutions is to compute 
a rational parameterization of its solutions. 

A rational parameterization is a representation of the (complex) solutions 
by a set of univariate polynomials and associated rational one-to-one mappings that send
the roots of the univariate polynomials to the solutions of the system.
Such representations enable to reduce computations on the system to
computations with univariate polynomials and thus ease, for instance, the
isolation of the solutions or the evaluation of other polynomials at the
solutions.

At the core of the algorithms that compute such 
paramete\-rizations (see for example
\cite{ABRW,bostan2003fast,det-jsc-2009,GLS01,VegKah:curve2d:96,Rou99} and references therein), is
the computation of a so-called \emph{linear separating form} for the solutions, that is  a linear combination of the coordinates that takes different values when evaluated at different solutions of the system.
Since a random linear form is
separating with probability one, 
probabilist Monte-Carlo algorithms can overlook this issue.
However, when it comes to deterministically computing a linear separating
  form, or even to check that an arbitrary chosen form is separating, this, surprisingly, turns out to be the bottleneck in the computation of rational parameterizations, in particular for bivariate systems as discussed below. This explains why, among the many algorithms that compute rational parameterizations, seldom search deterministically for a separating linear form. 

Considering systems of two bivariate polynomials of
  total degree bounded by $d$ with integer coefficients of bitsize bounded by
  $\tau$, one approach for computing a separating linear form together with a rational parameterization of the solutions has been presented by Gonzalez-Vega and El Kahoui \cite{VegKah:curve2d:96} and its bit complexity analyzed in~\cite{det-jsc-2009}. The analysis of this approach shows a bit complexity in $\sOB(d^{10}+d^9\tau)$ for computing a separating form and a bit complexity in $\sOB(d^{7}+d^6\tau)$ for computing the corresponding rational parameterization. The computation of a separating linear form was thus the bottleneck in the computation of the rational parameterization.  This is still true even
  when considering the additional phase of computing isolating boxes of the
  solutions (from the rational parameterization), which state-of-the-art
  complexity is in 
  $\sOB(d^8+d^7\tau)$ 
\blue{\cite[Prop. 35]{bouzidiJSC2014a}}.


More recently, Bouzidi et al.~\blue{\cite{bouzidiJSC2014a}} presented a new
algorithm for computing a separating linear form that reduces the previous bit complexity to
$\sOB(d^8+d^7\tau)$.
The same authors also showed that, given such a
separating linear form, an
alternative rational parameterization called RUR \cite{Rou99} can be computed using
$\sOB(d^7+d^6\tau)$ bit operations~\blue{\cite[Thm. 22]{bouzidiJSC2014a}}
and that isolating boxes
of the solutions can be computed from this RUR in $\sOB(d^6+d^5\tau)$ \cite[Thm. 6.1.2]{bouzidi:thesis}.
Consequently, despite the complexity improvement brought to the separating form computation, this step was still the bottleneck in the computation of a rational parameterization of a bivariate system and more generally in the whole solving process, i.e. including the numerical isolation phase.

In addition, although the problem of searching deterministically for a separating form is interesting from the theoretical point of view, in practice, a preferable approach would be to design a Las-Vegas algorithm that chooses randomly  a linear form and then checks that the latter is separating. However up to now, the problem of checking that an arbitrary linear form is separating has not been shown to be easier (at least in terms of asymptotic bit complexity) than the deterministic computation of a separating linear form.

\smallskip

\noindent{\bf Main results.}\quad
Our main contribution is a new deterministic algorithm of worst-case bit
complexity $\sOB(d^7+d^6\tau)$ for computing a separating linear form of a zero-dimensional system of two
bivariate polynomials of total degree at most $d$ and integer coefficients of
bitsize at most $\tau$ (Theorem~\ref{thm:final}). 

This algorithm is simpler than the one presented by Bouzidi et al.~\blue{\cite{bouzidiJSC2014a}} 
and it
decreases by a factor $d$ its complexity. This 
brings the complexity of solving bivariate systems by computing a rational parameterization to $\sOB(d^7+d^6\tau)$. 

A second contribution is a Las-Vegas algorithm for computing a separating linear form with an expected bit complexity in $\sOB(d^5+d^4\tau)$ (Theorem~\ref{thm:las-vegas}). This Las-Vegas algorithm stems naturally from the previous algorithm replacing the deterministic version of the univariate gcd computation by a Las-Vegas one.
Recall that, \blue{in Las-Vegas algorithms, the result is always correct and only the running time is probabilistic.}

 \section{Overview}
 \label{sec:overview}

Our algorithm is based on the one presented by Bouzidi et
al.  
 \blue{\cite{bouzidiJSC2014a}} on the same problem. For clarity, we briefly
recall the essence of that algorithm. It  first computes the number of distinct (complex) solutions of the input system $\{P,Q\}$ as well as a
prime number $\mu$ such that the input system considered modulo $\mu$ has the same number of
distinct solutions. This first step has worst-case bit complexity $\sOB(d^8+d^7\tau)$. 
All polynomials and computations are thereafter considered modulo
$\mu$. The algorithm then considers iteratively
a candidate separating element $x+ay$ with an integer $a$  incrementing  from 0. 
The input polynomials are considered through a  \blue{shearing} of
the coordinate system $(x,y)\leadsto(t-ay,y)$, and the degree of the squarefree
part of their resultant (with respect to $y$) is computed; 
in other words, the algorithm computes the number of
distinct solutions after  projection along the direction of the line $x+ay=0$. The algorithm stops
when a value $a$ is found such that the number of distinct projected solutions equals that of the
system. This step trivially computes a separating element $x+ay$ of the input system considered modulo
$\mu$ but the proof that this form is also separating of the  input system is not straightforward. 
This second step of the algorithm is presented in 
\blue{\cite{bouzidiJSC2014a}} with the same 
worst-case bit complexity as the first step  but we show \blue{in Section~\ref{sec:sep-form}} that it is straightforward to slightly  modify it so
that it has complexity \blue{$\sOB(d^7+d^3\tau)$.}

We present in this paper an improvement of the above algorithm using the following two
ingredients. First, we
show \blue{in Section~\ref{sec:PQ_curve}}  that computing a separating linear form for a system $\{P,Q\}$ is essentially equivalent (in terms of
asymptotic bit complexity) to computing a separating linear form for the critical points of a curve.
Second, 
we present \blue{in Section~\ref{sec:curve}}  a rather simple algorithm of worst-case bit complexity $\sOB(d^7+d^6\tau)$ for computing the number of critical points of a
curve, as well as a prime number $\mu$ such that the curve modulo $\mu$ has the same number of
 critical points. In essence, given a curve of equation $H$, this algorithm first computes a subresultant-based
triangular decomposition \cite{VegKah:curve2d:96} of the
system $\{H,\frac{\partial H}{\partial y}\}$ and the sum of the degrees of the resulting systems; the same
computation is done for the system $\{H,(\frac{\partial H}{\partial y})^2\}$ and we show that the
difference of these two sums of degrees is equal to the number of critical points of the 
curve $H$. We then perform the same computation modulo some prime numbers $\mu$ until the same
number of critical points is obtained. Finally, given this number of solutions and a
corresponding prime $\mu$, we obtain a separating linear form for the input system by  applying  the
\blue{variant presented in Section~\ref{sec:sep-form} of the} algorithm of 
\blue{\cite{bouzidiJSC2014a}} for computing a separating linear form for the critical points of the
curve.


\blue{Furthermore, we show in Section~\ref{sec:Las-Vegas} how this algorithm naturally extends to a Las-Vegas algorithm of
  expected bit complexity $\sOB(d^5+d^4\tau)$.}


\section{Notation and preliminaries}\label{sec:prelim}

We introduce notation and recall some classical material. Most of the material in this section is
taken literally from~\blue{\cite{bouzidiJSC2014a}}. 

 The bitsize of an integer $p$ is the number of bits needed to represent it, that
is $\lfloor\log p\rfloor+1$ ($\log$ refers to the logarithm in base 2). 
The bitsize of a polynomial with integer
coefficients is the \emph{maximum} bitsize of its coefficients. As mentioned
earlier, $O_B$ refers to the bit complexity and $\sO$ and $\sOB$ refer to
complexities where polylogarithmic factors are omitted, see \cite[Def.
25.8]{vzGGer2} for details. 


In the following, $\mu$ is a prime number and we denote by $\Z_\mu$ the quotient
$\Z/\mu\Z$.  We denote by $\phi_\mu$: $\Z \rightarrow \Z_\mu$ the reduction
modulo $\mu$, and extend this definition to the reduction of polynomials with
integer coefficients.  We denote by $\D$ a unique factorization domain,
typically $\Z[x,y]$, $\Z[x]$, $\Z_\mu[x]$, $\Z$ or $\Z_\mu$. We also denote by
 $\mathbb{F}$ a field, typically $\Q$, $\C$, or $\Z_\mu$ and by
 $\mathbb{F}_{\D}$ the fraction field of $\D$.

For any polynomial $P\in \D[x]$, let $Lc_x(P)$ denote its leading coefficient
with respect to the variable $x$ and
$d_x(P)$ its degree with respect to $x$.
For any curve defined by $H(x,y) \in \D[x,y]$, we call the critical points of $H$ with respect to $x$ or more shortly the critical point of $H$, the points that are solutions of the system $\{H,\frac{\partial H}{\partial y}\}$.
In this paper, 
the solutions of a system of polynomial are always considered in the algebraic closure of
$\mathbb{F}_{\D}$.

\smallskip
\noindent{\bf Subresultant sequences.}\quad
We first recall the concept of
\emph{polynomial determinant} of a matrix which is used in the definition of subresultants.
Let $M$ be an $m \times n$ matrix with $m \leq n$ and $M_i$ be the square submatrix of $M$ consisting of the first $m-1$ columns
and the $i$-th column of $M$, for $i=m,\ldots,n$. The \emph{polynomial determinant} of $M$ is the polynomial
defined as $\det(M_{m})y^{n-m}+\det(M_{m + 1})y^{n-(m+1)}+\cdots + \det(M_{n})$.

Let $P =\sum_{i=0}^p a_i y^i$ and $Q= \sum_{i = 0}^q b_i y^i$ be two polynomials
in $\D[y]$ and assume without loss of generality that $p \geq q$. The Sylvester
matrix of $P$ and $Q$, $Sylv(P,Q)$ is the $(p+q)$-square matrix whose rows are
$y^{q-1}P,\ldots,P,y^{p-1}Q,\ldots,Q$ considered as vectors in the basis
$y^{p+q-1},\ldots,y,1$.

\begin{definition}\label{def:sub-resultant}(\cite[\S 3]{Kahoui03}). 
  For $i=0,\ldots, \min(q,p-1)$, let $Sylv_i(P,Q)$ be the $(p+q-2i)\times
  (p+q-i)$ matrix obtained from $Sylv(P,Q)$ by deleting the $i$ last rows of the
  coefficients of $P$, the $i$ last rows of the coefficients of $Q$, and the $i$
  last columns.
 
  For $i=0,\ldots, \min(q,p-1)$, the $i$-th polynomial subresultant of $P$ and
  $Q$, denoted by $Sres_{y,i}(P,Q)$ is the polynomial determinant of
  $Sylv_i(P,Q)$.  When $q=p$, the $q$-th polynomial subresultant of $P$ and $Q$
  is $b_q^{-1}Q$.%
\end{definition}

$Sres_{y,i}(P,Q)$ has degree at most $i$ in $y$, and the coefficient of its
monomial of degree $i$ in $y$, denoted by ${sres}_{y,i} (P, Q)$, is called the
$i$-th \emph{principal subresultant coefficient}.  Note that ${Sres}_{y,0} (P,
Q)={sres}_{y,0} (P, Q)$ is the \emph{resultant} of $P$ and $Q$ with respect to
$y$, which we also denote by $Res_y(P,Q)$.

\smallskip
We state below a fundamental property of subresultants which is instrumental in
the triangular decomposition algorithm used in Section~\ref{sec:nb-critical}. For
clarity, we state this property for bivariate polynomials $P =\sum_{i=0}^p a_i
y^i$ and $Q= \sum_{i = 0}^q b_i y^i$ in $\D[x,y]$, with $p\geq q$.
Note that this property is often stated with a stronger assumption that is that
\emph{none} of the leading terms $a_p(\alpha)$ and $b_q(\alpha)$ vanishes.  This
property is a direct consequence of the specialization property of subresultants
and of the gap structure theorem; see for instance \cite[Lemmas 2.3, 3.1 and
Cor. 5.1]{Kahoui03}.

\begin{lemma}\label{lem:fund-prop-subres}
  For any $\alpha$ such that $a_p(\alpha)$ and $b_q(\alpha)$ do not both vanish,
  the first ${Sres}_{y,k}(P,Q)(\alpha,y)$ (for $k$ increasing) that does not
  identically vanish is of degree $k$ and it is the gcd of $P(\alpha,y)$ and
  $Q(\alpha,y)$ (up to a nonzero constant in the fraction field of
  $\D(\alpha)$).
\end{lemma}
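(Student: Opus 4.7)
The plan is to reduce this bivariate statement to the classical univariate result that the first non-vanishing entry of a subresultant sequence equals the gcd of the two polynomials, up to a nonzero multiplicative constant. That univariate result is precisely \cite[Cor.~5.1]{Kahoui03}, and the bridge to the bivariate setting is provided by the specialization property \cite[Lemma~2.3]{Kahoui03} together with the gap structure theorem \cite[Lemma~3.1]{Kahoui03} --- all three references already cited in the statement.

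I would split the argument into two cases depending on whether the evaluation $x \mapsto \alpha$ preserves the degrees in $y$. In the \textbf{regular case}, both $a_p(\alpha) \neq 0$ and $b_q(\alpha) \neq 0$, so $\deg_y P(\alpha,y) = p$ and $\deg_y Q(\alpha,y) = q$. The Sylvester matrix of $P$ and $Q$ then specializes row-by-row at $x=\alpha$ to the Sylvester matrix of $P(\alpha,y)$ and $Q(\alpha,y)$, so ${Sres}_{y,k}(P,Q)(\alpha,y) = {Sres}_{y,k}(P(\alpha,y),Q(\alpha,y))$ for every $k$, and the univariate result applies directly.

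In the \textbf{degenerate case}, exactly one of $a_p(\alpha), b_q(\alpha)$ vanishes; assume $b_q(\alpha)=0$ and $a_p(\alpha) \neq 0$ (the other subcase being symmetric). Now $Q(\alpha,y)$ has degree $q' < q$, and the specialized Sylvester matrix no longer equals the Sylvester matrix of the specializations. The gap structure theorem describes exactly this situation: vanishing principal subresultant coefficients force entire blocks of subresultants to vanish identically, and the first non-vanishing subresultant after a gap differs from the gcd of the specializations only by a nonzero scalar that is a product of powers of $a_p(\alpha)$ and of the non-vanishing principal subresultant coefficient marking the end of the gap. Since $a_p(\alpha) \neq 0$, all these factors are nonzero, so the first non-vanishing ${Sres}_{y,k}(P,Q)(\alpha,y)$ has degree exactly $k$ and is a nonzero scalar multiple of $\gcd(P(\alpha,y),Q(\alpha,y))$ in the fraction field of $\D(\alpha)$.

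The main obstacle is the careful bookkeeping in the degenerate case: one must identify the exact multiplicative factor introduced by the degree drop of $Q(\alpha,y)$ and verify that this factor is nonzero under the hypothesis that at least one of $a_p(\alpha), b_q(\alpha)$ is nonzero. This is precisely where the gap structure theorem becomes indispensable, and why the lemma is slightly stronger than what a naive specialization argument (which would require \emph{both} leading coefficients to remain nonzero) would yield.
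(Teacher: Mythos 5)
Your proposal takes essentially the same route as the paper, which gives no self-contained argument and simply defers the claim to the specialization property, the gap structure theorem, and Cor.~5.1 of Kahoui 2003. One small reattribution is worth noting: the degree drop at $x=\alpha$ is handled directly by the specialization property (Kahoui, Lem.~2.3), which requires only one of the two leading coefficients to survive and supplies the explicit nonzero correction factor, whereas the gap structure theorem (Lem.~3.1) is the ingredient behind Cor.~5.1's purely univariate conclusion that the first nonzero subresultant is nondefective and proportional to the gcd; also, the sub-case $a_p(\alpha)=0$ with $p>q$ is not literally symmetric to $b_q(\alpha)=0$, since the degree of $P(\alpha,y)$ may then fall below $q$ and reverse which polynomial is the larger one, though Kahoui's statements still cover it.
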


\noindent{\bf Complexity.}\quad
We recall complexity results, using fast algorithms, on subresultants and
gcd computations. 

\begin{lemma}[{\cite[Prop. 8.46]{BPR06} \cite[\S 8]{Reischert1997} \cite[Cor. 11.15]{vzGGer2}}]
  \label{complexity:subresultant}
  Let $P$ and $Q$ be in $\mathbb{Z}[x_1,\ldots, x_n][y] $ ($n$ fixed) with
  coefficients of bitsize at most  $\tau$
such that their degrees in $y$ are bounded by
  $d_{y}$ and their degrees in the other variables are bounded by~$d$. 
  \begin{itemize}\cramped
  \item The coefficients of $Sres_{y,i}(P,Q)$ have bitsize in
    $\sO(d_{y}\tau)$. 
  \item The degree in $x_j$ of $Sres_{y,i}(P,Q)$ is at most
    $2d(d_{y}-i)$. 
  \item Any subresultant $Sres_{y,i}(P,Q)$ as well as the sequence of principal subresultant coefficients $sres_{y,i}(P,Q)$ can be computed in
$\sO(d^{n} d_{y}^{n+1})$ arithmetic operations, 
    and $\sOB(d^{n}
    d_{y}^{n+2}\tau)$ bit operations. 
\end{itemize}
\end{lemma}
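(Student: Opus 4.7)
The plan is to establish each of the three claims by leveraging the determinantal definition of subresultants together with fast univariate algorithms and a multivariate evaluation-interpolation scheme in the parameters $x_1,\ldots,x_n$.

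First, for the bitsize and degree bounds, I would rely directly on Definition~\ref{def:sub-resultant}: each coefficient of $Sres_{y,i}(P,Q)$ in $y$ is the determinant of a $(p+q-2i)$-square submatrix of $Sylv(P,Q)$ whose entries are coefficients of $P$ or $Q$, hence polynomials in $\Z[x_1,\ldots,x_n]$ of bitsize at most $\tau$ and degree at most $d$ in each $x_j$. The degree bound in $x_j$ is then immediate: the determinant of a matrix of size $p+q-2i \leq 2(d_y - i)$ with entries of degree $d$ has degree at most $2d(d_y - i)$. For the bitsize, expanding the determinant as a signed sum of $(p+q-2i)!$ products of $p+q-2i$ entries and using Hadamard-type accounting for the bit-growth of polynomial multiplications yields a bound of $\sO((p+q-2i)\tau) = \sO(d_y \tau)$.

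For the computational cost, I would proceed by evaluation-interpolation on the parameters $x_1,\ldots,x_n$. Choose a grid of $\Theta(d\, d_y)$ values in each coordinate, for a total of $\Theta((d\, d_y)^n)$ specialization points, which exceeds the degree bound $2d(d_y - i)$ in each variable. At every point, the specializations of $P$ and $Q$ become univariate polynomials in $y$ of degree at most $d_y$, and their full sequence of polynomial subresultants together with the principal subresultant coefficients can be computed in $\sO(d_y)$ arithmetic operations using a half-gcd-style fast Euclidean algorithm. A multivariate interpolation then recovers every $Sres_{y,i}(P,Q)$ using $\sO((d\, d_y)^n)$ further operations. The total arithmetic complexity is therefore $\sO((d\, d_y)^n \cdot d_y) = \sO(d^n d_y^{n+1})$, and multiplying by the bitsize $\sO(d_y \tau)$ of the intermediate integers at each specialization yields the bit complexity $\sOB(d^n d_y^{n+2}\tau)$.

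The main obstacle in a self-contained write-up would be the careful choice of specialization points and the control of spurious degree drops: one must select evaluation grids at which the leading coefficients $a_p$ and $b_q$ do not simultaneously vanish, so that the specialization property (Lemma~\ref{lem:fund-prop-subres}) and the gap-structure theorem of \cite{Kahoui03} can be invoked to guarantee that the specialized subresultants are actually the subresultants of the specialized polynomials. One must also track bitsize growth through the univariate computation so that it genuinely stays in $\sO(d_y \tau)$. These routine but tedious considerations are precisely what the cited references \cite{BPR06,Reischert1997,vzGGer2} treat in detail, and they justify stating the bounds here without reproducing the full bookkeeping.
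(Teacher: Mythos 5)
The paper itself gives no proof of Lemma~\ref{complexity:subresultant}: it is stated as a cited result from \cite{BPR06}, \cite{Reischert1997} and \cite{vzGGer2}, so there is no ``paper's proof'' to compare against and your sketch has to be judged on its own. Your handling of the first two bullets (degree and bitsize of the coefficients) is fine and follows from the determinantal definition exactly as you describe. Your evaluation--interpolation argument also yields the stated \emph{arithmetic} complexity $\sO(d^n d_y^{n+1})$, provided one is careful at degenerate grid points; you flag this issue, though note you would also want to avoid \emph{either} leading coefficient vanishing, not merely both simultaneously, since otherwise the determinant of the specialized Sylvester-type matrix $Sylv_i(P,Q)(\alpha)$ need not coincide with the subresultant of the specialized polynomials. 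Also, your sentence claiming that the ``full sequence of polynomial subresultants'' can be obtained in $\sO(d_y)$ operations overshoots: that sequence has $\Theta(d_y^2)$ coefficients in total, so it cannot be produced in quasi-linear time; what holds is that a \emph{single} $Sres_{y,i}$, or the sequence of principal coefficients $sres_{y,i}$, can be computed in $\sO(d_y)$ operations, and that is exactly what the lemma requires.

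The genuine gap is in the \emph{bit} complexity. Your plan multiplies the arithmetic count by the output bitsize $\sO(d_y\tau)$, but this conflates the output size with the size of the intermediate integers that actually drive the cost. In a direct evaluation--interpolation over $\Z$, the grid must contain $\Theta(d\,d_y)$ values per coordinate, so the evaluation points have bitsize $\Theta(\log(d\,d_y))$ and specializing a polynomial of degree $d$ in each of the $n$ (fixed) variables blows up the coefficient bitsize from $\tau$ to $\sO(\tau+d)$. The univariate subresultant computations at the grid points then cost $\sOB\bigl(d_y^2(\tau+d)\bigr)$ each, and over $\Theta((d\,d_y)^n)$ points this gives $\sOB\bigl(d^n d_y^{n+2}(\tau+d)\bigr)$, which is \emph{not} $\sOB(d^n d_y^{n+2}\tau)$ unless $d=O(\tau)$. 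The cited sources reach the sharper bound by working modularly, i.e.\ reducing modulo $\sO(d_y\tau)$ small primes, performing the $\sO(d^n d_y^{n+1})$-operation computation over each $\Z_\mu$ at $\sO(1)$ bit cost per operation, and reconstructing by the Chinese Remainder Theorem, so that the extra factor in the bit count is the \emph{output} bitsize $\sO(d_y\tau)$ and not the inflated intermediate sizes. Your write-up never invokes the modular/CRT layer, so as it stands the third bullet's bit bound is not established.
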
    

In the sequel, we often consider the gcd of two univariate polynomials $P$ and
$Q$ and the gcd-free part of $P$ with respect to $Q$, that is, the divisor $D$
of $P$ such that $P=\gcd(P,Q)D$. Note that, when $Q=P'$,  the latter is the
squarefree part of~$P$, \blue{provided that the characteristic of the
  coefficient ring is zero or sufficiently large (e.g., larger than the degree of~$P$)}. 

\begin{lemma}[{\cite[Rem. 10.19]{BPR06}}]
\label{complexity:gcd}
  Let $P$ and $Q$ in $\mathbb{F}[x]$ of degree at most $d$. $\gcd(P,Q)$
or the gcd-free part of $P$ with respect to $Q$ can be computed with
  $\sO(d)$ operations in~$\mathbb{F}$.
\end{lemma}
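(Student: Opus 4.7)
The plan is to invoke the fast polynomial GCD algorithm based on the Half-GCD (HGCD) procedure, which is the standard way to reach quasi-linear complexity for a univariate GCD over an arbitrary field. The naive Euclidean algorithm only yields $O(d^2)$ operations in $\mathbb{F}$, so an additional divide-and-conquer idea coupled with fast polynomial multiplication is required. I would assume as a black box the fact that multiplication of two polynomials of degree at most $d$ in $\mathbb{F}[x]$ can be done in $M(d) = \sO(d)$ operations (e.g., via FFT-based schemes).

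The central tool is the HGCD algorithm. Given $A, B \in \mathbb{F}[x]$ with $\deg A > \deg B$ and $\deg A = d$, HGCD produces a $2 \times 2$ matrix $M$ of polynomials in $\mathbb{F}[x]$ such that the transformation $M \cdot (A, B)^T$ yields a new pair whose larger degree has dropped by roughly $d/2$. Intuitively, $M$ is the product of the quotient matrices of about half of the steps of the extended Euclidean algorithm, computed without ever writing those intermediate remainders down. The matrix $M$ is built recursively by calling HGCD on the top halves (in degree) of $A$ and $B$, applying the resulting matrix, performing a single Euclidean reduction step, and then calling HGCD once more on the top halves of what remains.

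The cost $T(d)$ of HGCD then satisfies $T(d) \leq 2\,T(d/2) + O(M(d))$; by the master theorem this gives $T(d) = O(M(d) \log d) = \sO(d)$. A full GCD computation is obtained by repeating HGCD until the smaller operand has degree zero: successive calls act on polynomials of geometrically decreasing degree, so the total cost remains $\sO(d)$. For the gcd-free part $P/\gcd(P,Q)$, we perform one additional exact polynomial division of $P$ by $G = \gcd(P,Q)$; fast division via Newton iteration on the reversed polynomials runs in $O(M(d)) = \sO(d)$ operations, preserving the global bound.

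The main obstacle in a fully detailed proof would be the careful handling of the degenerate cases in HGCD, where the degree drop at some step is much larger or much smaller than the expected $d/2$, so that the cleanly balanced recursion above has to be replaced by a slightly more cautious version that still achieves the same asymptotic cost. These technicalities, however, are purely combinatorial bookkeeping on degrees and do not affect the asymptotic $\sO(d)$ bound; they are the reason one typically cites a textbook treatment such as the one in \cite{BPR06} or the one used in \cite[Cor.~11.15]{vzGGer2} rather than redoing the argument from scratch.
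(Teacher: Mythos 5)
The paper does not prove this lemma; it simply cites \cite[Rem.\ 10.19]{BPR06} (and elsewhere \cite[Cor.\ 11.15]{vzGGer2}) as a black box. Your sketch is the standard Half-GCD argument that underlies those textbook results, and it is correct: the recurrence $T(d)\leq 2T(d/2)+O(M(d))$ gives $T(d)=O(M(d)\log d)=\sO(d)$ with quasi-linear multiplication, the geometric decay of degrees across successive HGCD calls keeps the full GCD within the same bound, and the gcd-free part follows by one fast exact division. You also correctly flag the delicate point (unbalanced degree drops in degenerate cases) that makes a from-scratch proof tedious and is the usual reason to cite the textbook treatment rather than redo it. So you have faithfully reconstructed the proof the paper is implicitly relying on, at the appropriate level of detail.
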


\section{Separating linear form}
\label{sec:sep-form}

As mentioned in the overview, our approach for computing a separating form of a zero-dimensional
system $\{P,Q\}$   is similar to the one
in \blue{\cite{bouzidiJSC2014a}} \blue{once we know} the number of distinct solutions and a
so-called lucky prime $\mu$. \blue{Such a  lucky prime} is, roughly speaking, a prime such
that $\{P,Q\}$ \blue{has} the same number of distinct solutions as its 
\blue{image modulo $\mu$.}
Before presenting Algorithm \ref{alg:sep-elem}, \blue{which computes a separating linear form in
this context,} we introduce  the following notation and  formally define lucky primes.

%

Given the two input polynomials $P$ and $Q$, 
we  consider the ``generic'' change of variables $x=t-sy$, and define the
 ``sheared'' polynomials $P(t-sy,y)$, $Q(t-sy,y)$,
 and their resultant with respect to $y$, 
\[{ R(t,s)}=Res_y({P}(t-sy,y),{Q}(t-sy,y)).\]
 We introduce the following notation for the leading coefficients of these polynomials; 
\[L_{P}(s) = Lc_y({P}(t-sy,y))\quad L_{Q}(s) = Lc_y({Q}(t-sy,y)).\]
Note that these polynomials do not depend on $t$.

\begin{definition}[{\cite[Def. 8]{bouzidiJSC2014a}}]\label{def:lucky-nb-roots}
 A prime number $\mu$ is said to be {\em \bf lucky} for a zero-dimensional system  $\{P,Q\}$ if
 $\{P,Q\}$ and $\{\phi_\mu(P),\phi_\mu(Q)\}$ have the same  number of distinct 
 solutions and if $\mu>2d^4$   and
\[\phi_\mu(L_P(s))\ \phi_\mu(L_Q(s))\not\equiv 0.\]
\end{definition}

\blue{Note that we consider $\mu$ in $\Omega(d^4)$ in Definition~\ref{def:lucky-nb-roots}  because,
in Algorithm~\ref{alg:sep-elem},  we want to ensure
that there exists, for the system $\{P_\mu,Q_\mu\}$ (resp. $\{P,Q\}$), a separating form $X+aY$ {with $a\in\Z_\mu$}
(resp. $0\leq a<\mu$ in $\Z$). The constant 2 in
the bound $2d^4$ is an overestimate, which simplifies some  proofs in \blue{\cite{bouzidiJSC2014a}}.}

Recall that we consider we know the number of distinct (complex) solutions of  system $\{P,Q\}$ and a
lucky prime $\mu$ for that system. 
Algorithm~4 
of~\blue{\cite{bouzidiJSC2014a}} 
computes a separating linear form for $\{P,Q\}$ by
considering iteratively linear forms $x+ay$, where $a$ is an integer incrementing from 0
and by  computing the degree of the squarefree part of the reduction modulo $\mu$ of $R(t,a)$ until
this degree is equal to the (known) number of distinct solutions of the system and such that
$\phi_\mu(L_P(a))\ \phi_\mu(L_Q(a))\neq 0$. 

Doing so, the algorithm computes a separating form for the system modulo $\mu$,
which, under the hypothesis of the luckiness of $\mu$, has been proven to be
also separating for the system $\{P,Q\}$.  In Algorithm \ref{alg:sep-elem}, we
follow the same approach except that we perform the computations in a slightly
different way\footnote{\small \blue{Namely, in Algorithm \ref{alg:sep-elem}, we
    first compute the reduction modulo $\mu$ of the input polynomials $P$ and
    $Q$ (Line~\ref{alg:sep3}) and then, for every value of $a$, the resultant of
    their sheared images through the change of variables $(x,y)\leadsto(t-ay,y)$
    (Line~\ref{alg:sep6}), while in \cite[Algorithm~4]{bouzidiJSC2014a}, we
    first compute the reduction modulo $\mu$ of the resultant $R(t,s)$ and then,
    for every value of $a$, its specialization at $s=a$.}} so that the
complexity is in $\sOB(d^7+d^3{\tau})$ (instead of $\sOB(d^8+d^7{\tau})$ in
\blue{\cite{bouzidiJSC2014a}}).

\begin{algorithm}[t]
  \caption{Separating form for $\{P,Q\}$}
\label{alg:sep-elem}
\begin{algorithmic}[1]
  \REQUIRE{ $P,Q$ in $\Z[x,y]$ of total degree at most $d$ and defining a zero-dimensional
system,  its number $N$ of  distinct (complex) solutions and a lucky prime $\mu$ of bitsize
  \blue{$\OO(\log d)$}}
  \ENSURE{A separating linear form $x+ay$ for $\{P,Q\}$, with $a<2d^4$}

\STATE Compute $P(t-sy,y)$ and  $Q(t-sy,y)$
\label{alg:sep1}
\STATE Compute $\Upsilon_\mu(s)=\phi_\mu(L_P(s))\ \phi_\mu(L_Q(s))$\label{alg:sep2}
\STATE Compute $P_\mu=\phi_\mu(P)$ and $Q_\mu=\phi_\mu(Q)$\label{alg:sep3}
\STATE $a:=0$\label{alg:sep4}
\REPEAT \label{alg:sep5}
\STATE Compute $P_\mu(t-ay,y)$, $Q_\mu(t-ay,y)$ and their resultant
$R_{\mu,a}(t)$\label{alg:sep6}
\STATE \label{alg:sep7}Compute the degree $N_a$ of the squarefree part of $R_{\mu,a}(t)$
\STATE $a:=a+1$ 
\UNTIL $\Upsilon_\mu(a)\neq 0$\footnotemark and $N_a=N$ \label{alg:sep9}
\RETURN The linear form $x+ay$\label{alg:sep10}
\end{algorithmic}
\end{algorithm}
\footnotetext{\small $\Upsilon_\mu(s)\in\Z_\mu[s]$ and we consider  $\Upsilon_\mu(a)$ in $\Z_\mu$.}

\begin{proposition}\label{prop:sep-elem-comp}
Algorithm~\ref{alg:sep-elem}  computes a separating linear form $x+ay$ for $\{P,Q\}$ with
$a<2d^4$ with a bit complexity  $\sOB(d^7+d^3{\tau})$. 
\end{proposition}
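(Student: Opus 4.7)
The plan is to first establish correctness and termination with $a<2d^4$ by reusing arguments of \cite{bouzidiJSC2014a}, and then to analyze the setup cost and the per-iteration loop cost separately, focusing on the latter, where the main subtlety lies.

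For correctness, when the loop exits, $\Upsilon_\mu(a)\neq 0$ ensures that the leading $y$-coefficients of $P_\mu(t-ay,y)$ and $Q_\mu(t-ay,y)$ are nonzero, so the distinct roots of $R_{\mu,a}(t)$ are exactly the distinct projections of the $N$ solutions of $\{P_\mu,Q_\mu\}$ along $x+ay=0$; combined with $N_a=N$, this shows that $x+ay$ separates $\{P_\mu,Q_\mu\}$, and the luckiness of $\mu$ (together with the corresponding result of \cite{bouzidiJSC2014a}) transfers the separating property to $\{P,Q\}$. For the bound on $a$, the form $x+ay$ fails to separate the $N\leq d^2$ solutions of $\{P_\mu,Q_\mu\}$ for at most $\binom{N}{2}\leq d^4/2$ values of $a\in\Z_\mu$, and $\Upsilon_\mu(s)$, of degree at most $2d$, has at most $2d$ roots in $\Z_\mu$; since $\mu>2d^4$, the integers $0,\ldots,2d^4-1$ are pairwise distinct modulo $\mu$, so at least one of them avoids both bad sets and the loop halts with $a<2d^4$.

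The setup (Lines~\ref{alg:sep1}--\ref{alg:sep3}) is routine: $P(t-sy,y)$ and $Q(t-sy,y)$ in $\Z[t,s,y]$ are expanded monomial by monomial using binomial coefficients, yielding $O(d^3)$ coefficients of bitsize $O(\tau+d)$ for a total of $\sOB(d^3\tau+d^4)$; extracting $L_P,L_Q$, reducing $P,Q$ modulo $\mu$, and multiplying the reductions of $L_P,L_Q$ in $\Z_\mu[s]$ is dominated by $\sOB(d^2\tau)$.

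The heart of the analysis is the per-iteration cost of the loop, and this is the main step where care is needed. From Line~\ref{alg:sep6} onward every arithmetic operation is performed in $\Z_\mu$ with $\log\mu=\OO(\log d)$, so no coefficient growth occurs and each scalar operation costs $\sOB(1)$. The shearings $P_\mu(t-ay,y)$ and $Q_\mu(t-ay,y)$ can be computed term by term using precomputed powers of $a$ and binomial coefficients in $\sO(d^3)$ operations in $\Z_\mu$; the bivariate resultant $R_{\mu,a}(t)$ of two polynomials of $y$-degree $d$ with coefficients of $t$-degree $d$ costs $\sO(d^3)$ operations in $\Z_\mu$ by the arithmetic-operation count of Lemma~\ref{complexity:subresultant} with $n=1$; and the degree of the squarefree part of the univariate $R_{\mu,a}(t)$ of degree $\leq d^2$ is obtained in $\sO(d^2)$ operations in $\Z_\mu$ via Lemma~\ref{complexity:gcd}. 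Hence one iteration costs $\sOB(d^3)$, and the at most $2d^4$ iterations total $\sOB(d^7)$. The crucial point is that the $d_y\tau$ factor in the bit complexity of Lemma~\ref{complexity:subresultant} reflects coefficient growth during subresultant computation over $\Z$ and disappears over $\Z_\mu$ when $\log\mu$ is polylogarithmic in $d$; this is precisely what distinguishes the present variant from the trivariate-resultant approach of \cite[Algorithm~4]{bouzidiJSC2014a} and yields, after summing with the setup cost, the claimed bit complexity $\sOB(d^7+d^3\tau)$.
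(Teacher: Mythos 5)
Your proposal is correct and follows essentially the same strategy as the paper: expand the sheared polynomials once over $\Z$, reduce to $\Z_\mu$ with $\log\mu=O(\log d)$, and observe that each of the at most $2d^4$ loop iterations costs only $\sOB(d^3)$ because all arithmetic is over $\Z_\mu$ with no coefficient growth, giving the $\sOB(d^7)$ loop term plus the $\sOB(d^4+d^3\tau)$ setup. The one place you diverge is the correctness and the bound on $a$: the paper defers to \cite[Algorithm~4]{bouzidiJSC2014a} and only needs to check that $\phi_\mu(R)(t,a)=R_{\mu,a}(t)$ (a commutation-of-morphisms plus specialization-of-resultants argument), whereas you re-derive correctness directly by counting the at most $\binom{N}{2}+2d<2d^4$ bad values of $a$ in $\Z_\mu$ and invoking luckiness to transfer separation back to $\{P,Q\}$. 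Both are valid; the paper's reduction is shorter and avoids restating that $\{P_\mu,Q_\mu\}$ has the same number $N$ of distinct solutions, while your version is more self-contained and makes the termination bound transparent. (Minor: your $\sOB(d^2\tau)$ for Lines 2--3 overlooks the $\sO(d+\tau)$-bitsize of $L_P,L_Q$, but this is subsumed by the Line 1 cost and does not affect the conclusion.)
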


\begin{proof}
We first prove the correctness of Algorithm~\ref{alg:sep-elem}  which essentially follows from 
\blue{\cite[Algorithm~4]{bouzidiJSC2014a}}. 
The latter algorithm computes the degree of the
squarefree part of $\phi_\mu(R)(t,a)$ until the condition of Line~\ref{alg:sep9} is satisfied, and
it returns the corresponding form $x+ay$.
It is thus sufficient to argue that $\phi_\mu(R)(t,a)=R_{\mu,a}(t)$. 

Denoting by $\psi_a$ the morphism that evaluates a polynomial at $s=a$, and $\text{Res}_y$ the
resultant with respect to $y$, we have 
\[\begin{split}
\phi_\mu(R)(t,a)&=\psi_a\circ\phi_\mu(\text{Res}_y(P(t-sy,y),Q(t-sy,y)) =\\
&\text{Res}_y(\psi_a\circ\phi_\mu(P(t-sy,y)), \psi_a\circ\phi_\mu(Q(t-sy,y)))
\end{split}\]
by the specialization property of the resultants 
since the leading coefficients of $P$ and $Q$ (with respect to $y$) do not vanish through
$\psi_a\circ\phi_\mu$ when the condition $\Upsilon_\mu(a)\neq 0$ is
satisfied in  Line~\ref{alg:sep9}. Furthermore, $\phi_\mu(P(t-sy,y))=\phi_\mu(P)(t-sy,y)$ and similarly for $Q$, which
implies that the right-hand side  of the equation is equal to $R_{\mu,a}(t)$. This concludes the proof
of correctness. Note that this correctness includes the property that the output
integer $a$ is less than $2d^4$.

We now prove the complexity of our algorithm. 
It is
straightforward that,  in Line~\ref{alg:sep1}, the sheared polynomials $P(t-sy,y)$ and
$Q(t-sy,y)$ can be computed  in bit complexity
$\sOB(d^4+d^3\tau)$ and that their bitsizes are in $\sO(d+\tau)$ (see e.g. \blue{\cite[Lemma 7]{bouzidiJSC2014a}}). 
In Lines \ref{alg:sep2} and \ref{alg:sep3}, the polynomials, in one or two variables, have degree at most $d$ and bitsize $\sO(d+\tau)$. The reduction of each of their $O(d^2)$ coefficients modulo $\mu$ can be done in a bit complexity that is softly linear in the
 maximum bitsizes \cite[Thm. 9.8]{vzGGer2}, that is  in a total bit complexity of $\sOB(d^3+d^2\tau)$. 
In Line  \ref{alg:sep6}, computing the polynomials $P_\mu(t-ay,y)$ and $Q_\mu(t-ay,y)$ is performed using
$\sOB(d^3)$ bit operations (see e.g. the proof \blue{\cite[Lemma 7]{bouzidiJSC2014a}}) 
and similarly for their resultant $R_{\mu,a}(t)$ according to
Lemma~\ref{complexity:subresultant}. 
In Line  \ref{alg:sep7}, the squarefree part of $R_{\mu,a}(t)$ can also be computed in $\sOB(d^3)$
bit operations by Lemma~\ref{complexity:gcd}, since the resultant has degree $O(d^2)$.
We have shown that the loop stops with $a<2d^4$, thus the whole  loop has complexity
$\sOB(d^7)$, which concludes the proof.
\end{proof}

\section{From a system to a curve}\label{sec:PQ_curve}

In this section, we consider two polynomials $P,Q \in \Z[x,y]$ of total degree at most $d$ and maximum bitsize $\tau$ and show that it is essentially equivalent from an asymptotic worst-case bit
complexity point of view to compute a separating linear form for a system $\{P,Q\}$ and  to compute a
separating linear form for the critical points of a curve.  For simplicity, we refer to the latter
as a separating linear form for a curve.  

By definition, the critical points of a
curve of equation $H$ are the solutions of the system $\{H,\frac{\partial H}{\partial y}\}$, thus
computing a separating linear form for a curve amounts by definition to computing a
separating linear form for a system of two equations. Conversely, a separating linear form for the curve $PQ$ is also separating for
the system $\{P,Q\}$ since any solution of $\{P,Q\}$ is also solution of $PQ$ and of
$\frac{\partial PQ}{\partial y}=P\frac{\partial Q}{\partial y}+ \frac{\partial P}{\partial y}
Q$. 

However, it may happen that the curve $PQ$ admits no separating linear form even
if $\{P,Q\}$ admits one. Indeed, $\{P,Q\}$ can be zero-dimensional while $PQ$ is
not squarefree (and such that the infinitely many critical points cannot be
separated by a linear form).  Nevertheless, if $P$ and $Q$ are coprime and
squarefree, then $PQ$ is squarefree and thus it has finitely many singular
points.
\blue{Still the curve $H=PQ$ may contain vertical lines, and thus infinitely many
critical points, but this issue can easily be handled by shearing the coordinate
system.}

\begin{lemma}
\label{lem:system2curve}
Given a zero-dimensional system of two polynomials $P$ and $Q$ in $\Z[x,y]$ of maximum degree $d$ and maximum
bitsize $\tau$, we can compute in complexity $\sOB(d^6+d^5\tau)$ a \blue{shearing} of the coordinate system
$(x,y)\leadsto(t-\alpha y,y)$ ($\alpha$ integer in $O(d)$) and a polynomial $H$ in $\Z[t,y]$ of degree at most $2d$ and bitsize $\sO(d+\tau)$  so
that the system $\{H,\frac{\partial H}{\partial y}\}$ is zero-dimensional and any separating linear
form for that system is also separating for $\{P,Q\}$ after being sheared back. 
\end{lemma}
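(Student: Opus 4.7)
My plan is to take $H$ to be the product of the squarefree parts of sheared copies of $P$ and $Q$. First I pick an integer $\alpha\in\{0,\ldots,2d\}$ such that $L_P(\alpha)L_Q(\alpha)\neq 0$; this ensures that the sheared polynomials $P_\alpha(t,y):=P(t-\alpha y,y)$ and $Q_\alpha(t,y):=Q(t-\alpha y,y)$ both have $y$-degree exactly $d$, and hence (their total degree being $d$) have nonzero constant leading coefficients in $y$. Since $L_P L_Q$ is a nonzero polynomial in $s$ of degree at most $2d$, such an $\alpha$ exists in the given range and can be located by $O(d)$ evaluations. I then compute the $y$-squarefree parts $\tilde P:=P_\alpha/\gcd_y(P_\alpha,\partial P_\alpha/\partial y)$ and $\tilde Q:=Q_\alpha/\gcd_y(Q_\alpha,\partial Q_\alpha/\partial y)$, taking the gcds in $\Q(t)[y]$ and their primitive parts in $\Z[t][y]$, and I set $H:=\tilde P\cdot\tilde Q$.

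To argue correctness, I will reason as follows. Because $P_\alpha$ has a constant leading coefficient in $y$, it is primitive as an element of $\Z[t][y]$ and has no irreducible factor depending only on $t$; hence every irreducible factor of $P_\alpha$ has positive $y$-degree and the $y$-squarefree part coincides with the full bivariate squarefree part. The same holds for $Q_\alpha$. Zero-dimensionality of $\{P,Q\}$ forces $P,Q$ to be coprime in $\Q[x,y]$ (any nonconstant common factor would give a curve of common zeros), so $\tilde P,\tilde Q$ are coprime in $\Q(t)[y]$. Being $y$-squarefree and $y$-coprime, their product $H$ is squarefree in $\Q(t)[y]$, so $\gcd(H,\partial H/\partial y)=1$ in $\Q(t)[y]$; since $H$ also inherits a constant leading coefficient in $y$ and hence no purely $t$-dependent factor, this lifts to $\gcd(H,\partial H/\partial y)=1$ in $\Q[t,y]$, so $\{H,\partial H/\partial y\}$ is zero-dimensional. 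By the product rule, every common zero of $\tilde P,\tilde Q$ is a critical point of $H$, and because $\tilde P,\tilde Q$ share zero sets with $P_\alpha,Q_\alpha$, the solutions of $\{P_\alpha,Q_\alpha\}$ lie in the critical locus of $H$. Consequently any separating linear form $t+\beta y$ for $\{H,\partial H/\partial y\}$ also separates $\{P_\alpha,Q_\alpha\}$, and unshearing yields $x+(\alpha+\beta)y$ as a separating form for $\{P,Q\}$.

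For the complexity and size bounds, shearing produces $P_\alpha,Q_\alpha$ of bitsize $\sO(d+\tau)$ in $\sOB(d^4+d^3\tau)$ bit operations, as in the proof of Proposition~\ref{prop:sep-elem-comp}. By Lemma~\ref{complexity:subresultant} applied with $n=1$, $d_y=d$, and input bitsize $\sO(d+\tau)$, the subresultant sequence of $(P_\alpha,\partial P_\alpha/\partial y)$---from which the gcd and then $\tilde P$ are extracted---is computed in $\sOB(d^5+d^4\tau)$, and likewise for $\tilde Q$; a Mignotte-type bound on divisors keeps the bitsizes of $\tilde P,\tilde Q$ in $\sO(d+\tau)$. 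Hence $H=\tilde P\tilde Q$ has total degree at most $2d$, bitsize $\sO(d+\tau)$, and is formed by bivariate multiplication in $\sOB(d^3+d^2\tau)$, yielding an overall cost of $\sOB(d^5+d^4\tau)$, comfortably within the claimed $\sOB(d^6+d^5\tau)$ budget. The main subtlety I expect to have to justify is the reduction from the bivariate squarefree part to the $y$-direction squarefree part: this is legitimate only thanks to the preliminary shear, which forbids pure-$t$ factors (i.e.~vertical lines in the curve $H=0$) that would otherwise contribute infinite families of spurious critical points and destroy the zero-dimensionality of $\{H,\partial H/\partial y\}$.
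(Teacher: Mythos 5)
Your construction is essentially the paper's, with the two main steps interchanged: the paper first computes the squarefree parts of $P$ and $Q$ over $\Z[x,y]$ (citing~\cite[Lemma~13]{sagraloff2013}, which costs $\sOB(d^6+d^5\tau)$) and only then shears the product $H$ so that its $y$-leading coefficient is a constant, whereas you shear first and take squarefree parts afterwards. Both orders are valid, and your observation that, once the leading coefficient in $y$ is constant, the $y$-direction squarefree part coincides with the full bivariate squarefree part is exactly the right structural reason; the paper phrases the same point as ``a squarefree $H$ with constant $y$-leading coefficient has no vertical line, hence finitely many critical points.'' The correctness reasoning (coprimality of $P,Q$ from zero-dimensionality, the product rule putting common zeros in the critical locus, unshearing of the separating form) is identical.

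The one place you are under-argued is the claim that $\tilde P$ and $\tilde Q$ can be extracted in $\sOB(d^5+d^4\tau)$. Lemma~\ref{complexity:subresultant} does give the subresultant sequence of $(P_\alpha,\partial P_\alpha/\partial y)$ within that budget, but the first nonzero $Sres_{y,k}$---which is the gcd only up to a factor in $\Q(t)$---has $t$-degree up to $O(d^2)$ and bitsize $\sO(d^2+d\tau)$. You still need to compute its $\Z[t]$-content (a univariate gcd among polynomials of degree $O(d^2)$ and bitsize $\sO(d^2+d\tau)$) and carry out the exact bivariate division of $P_\alpha$ by the resulting primitive part; neither step is obviously within $\sOB(d^5+d^4\tau)$, and the straightforward bounds give $\sOB(d^6+d^5\tau)$. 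This does not endanger the lemma itself, since its budget is $\sOB(d^6+d^5\tau)$ and you stay inside it---for instance by simply applying~\cite[Lemma~13]{sagraloff2013} to $P_\alpha$ and $Q_\alpha$ (degree $d$, bitsize $\sO(d+\tau)$), exactly as the paper applies it to $P$ and $Q$---but the sharper $\sOB(d^5+d^4\tau)$ you announce for this sub-step would need a more careful argument.
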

\begin{proof}
As discussed above, we  first compute the squarefree part of \blue{each polynomial $P$ and $Q$}, which can be
done in complexity $\sOB(d^6+d^5\tau)$ \cite[Lemma 13]{sagraloff2013}. \blue{Let $H(x,y)$ denote
  their product, which is squarefree since $P$ and $Q$ are coprime.}
We then consider a generic \blue{shearing} of the
coordinate system $(x,y)\leadsto(t-s y,y)$ in order to find a value $s=\alpha$ so that the sheared
curve $\widehat{H}(t,y)=H(t-\alpha y,y)$ has no vertical asymptote \blue{ and thus no
vertical line.}
The leading coefficient of $H(t-s y,y)$ (seen as a polynomial in $y$) is a polynomial of degree at
most $d$ in $\Z[s]$ ($t$ does not appear in the leading
term); furthermore an expanded form of $H(t-s y,y)$  can  be computed in complexity
$\sOB(d^4+d^3\tau)$ and the coefficients have 
bitsize $\sO(d+\tau)$ (see e.g. \blue{\cite[Lemma 7]{bouzidiJSC2014a}}). 
Finding an integer value $s=\alpha$ where the leading coefficient does not vanish can thus be done in $d$
evaluations of complexity $\sOB(d(d+\tau))$ each \blue{\cite[Lemma 6]{bouzidiJSC2014a}} 
and such  $\alpha$ can be found in \blue{ $[0,d]$}. Then, computing $H(t-\alpha y,y)$   can be done by evaluating
each of the coefficients of $H(t-s y,y)$ at $s=\alpha$, which can again be done with $O(d)$
evaluations of  complexity $\sOB(d(d+\tau))$ each. 
Thus, we can shear the curve in complexity $\sOB(d^4+d^3\tau)$ so that the leading coefficient of the resulting polynomial $\widehat{H}(t,y)=H(t-\alpha y,y)$ (seen
as a polynomial in $y$) is a constant.

Modulo the \blue{shearing}, all solutions of $\{P,Q\}$ are solutions of the system $\{\widehat{H},\frac{\partial
\widehat{H}}{\partial y}\}$. Indeed, a solution $(x_0,y_0)$ of $\{P,Q\}$ is such that
$(t_0=x_0+\alpha y_0,  y_0)$ is solution of $\{\widehat{P},\widehat{Q}\}$ with
$\widehat{P}(t,y)$ equal to the squarefree part of $P(t-\alpha y,y)$ and similarly for
$\widehat{Q}$; thus $(t_0,  y_0)$ is solution of $\widehat{H}=\widehat{P}\widehat{Q}$ and of $\frac{\partial
\widehat{H}}{\partial y}=\widehat{P}\frac{\partial
\widehat{Q}}{\partial y} + \frac{\partial
\widehat{P}}{\partial y} \widehat{Q}$. Thus, any separating linear form for  $\{\widehat{H},\frac{\partial
\widehat{H}}{\partial y}\}$ is also separating for $\{P,Q\}$ modulo the \blue{shearing}. Finally, $\{\widehat{H},\frac{\partial
\widehat{H}}{\partial y}\}$ is zero-dimensional since, by construction,
$\widehat{H}$ is squarefree and contains no vertical
line.
Renaming $\widehat{H}$ by $H$, this concludes the proof.
\end{proof}

\section{The case of a curve}
\label{sec:curve}

In this section, we consider an arbitrary curve defined by  $H\in\Z[x,y]$ of degree $d$ and bitsize
$\tau$, with a constant leading
coefficient in $y$, and such that $H$ has a finite number of critical points, i.e., the system
$\{H,\frac{\partial H}{\partial y}\}$ is zero-dimensional. We show in the following that  (i) computing the number of the critical points of
$H$ and (ii) computing a lucky prime for $\{H,\frac{\partial H}{\partial
    y}\}$ (see Definition~\ref{def:lucky-nb-roots}) can be done in a bit complexity in $\sOB(d^7+d^6\tau)$. 
Combined with the results of the previous sections, this will yield that we can compute a separating
linear form for an arbitrary zero-dimensional system $\{P,Q\}$ in the same complexity.

\subsection{Number of critical points}\label{sec:nb-critical}

Our algorithm for computing  the number of (complex) critical points of a curve is based on  
a classical algorithm for computing a triangular decomposition of
a system of two bivariate polynomials. We first recall this algorithm and then show how it can be slightly modified and used to compute the number of critical points of a curve.

\smallskip
\noindent{\bf Triangular decomposition.}\quad
Let $P$ and $Q$ be two polynomials in $\mathbb{F}[x,y]$ of degree at most $d$.
A decomposition  of the system $\{P,Q\}$ using the
subresultant sequence appears in the theory of triangular sets
\cite{Li-modpn-11} and for the computation of the topology of curves
\cite{VegKah:curve2d:96}.

The idea is based on Lemma~\ref{lem:fund-prop-subres} which states that, after
specialization at $x=\alpha$, the first (with respect to increasing $i$) nonzero
subresultant $Sres_{y,i}(P,Q)(\alpha,y)$ is of degree $i$ and is equal to the
gcd of $P(\alpha,y)$ and $Q(\alpha,y)$.  This induces a decomposition 
into triangular subsystems $(\{A_i(x),$ $Sres_{y,i}(P,Q)(x,y)\})$ where a solution
$\alpha$ of $A_i(x)=0$ is such that the system $\{P(\alpha,y), Q(\alpha,y)\}$
admits exactly $i$ roots (counted with multiplicity), which are exactly those of
$Sres_{y,i}(P,Q)(\alpha,y)$.  Furthermore, these triangular subsystems are
regular chains, i.e., the leading coefficient of the bivariate polynomial (seen
in $y$) is coprime with the univariate polynomial.  For clarity and
self-containedness, we recall this decomposition in
Algorithm~\ref{alg:tri-dec-mod}. 
Note that this algorithm performs $\sO(d^4)$ arithmetic operations in $\mathbb{F}$ (see e.g. \blue{\cite[Lemma 15]{bouzidiJSC2014a}}).  
We also state the following properties which directly follow from the algorithm and Lemma~\ref{lem:fund-prop-subres}.

\begin{lemma}[{\cite{VegKah:curve2d:96,Li-modpn-11}}]\label{lem:tridec-correctness}
   Algorithm~\ref{alg:tri-dec-mod} computes a triangular decomposition 
  $\{(A_i(x),$ $B_i(x,y))\}_{i\in\cal I}$ such that 

\begin{itemize}\cramped
  \item the set of solutions of $\{P,Q\}$ is the disjoint union  of the sets of solutions of the
  $\{A_i(x),B_i(x,y)\}$, ${i\in{\cal I}}$
  \item $\prod_{i\in\cal I}A_i$ is  squarefree, 
  \item for any root $\alpha$ of $A_i$,   
      $B_i(\alpha,y)$ is of degree $i$ and is equal to $\gcd(P(\alpha,y),$  $Q(\alpha,y))$. 
\end{itemize}
\end{lemma}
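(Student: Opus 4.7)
The plan is to derive the three properties as direct consequences of the structure of Algorithm~\ref{alg:tri-dec-mod}, combined with Lemma~\ref{lem:fund-prop-subres}. The algorithm stratifies the roots of $\mathrm{Res}_y(P,Q)$ according to the least index $i$ at which the principal subresultant coefficient $sres_{y,i}(P,Q)$ does not vanish; each stratum yields one pair $(A_i,B_i)$ of the decomposition, with $B_i := Sres_{y,i}(P,Q)$.

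First I would recall the explicit construction implicit in the algorithm: set $\Gamma_{-1}$ to be the squarefree part of $\mathrm{Res}_y(P,Q)$, and iteratively for $i=0,1,\ldots$ define $\Gamma_i := \gcd(\Gamma_{i-1},sres_{y,i}(P,Q))$ and $A_i := \Gamma_{i-1}/\Gamma_i$, keeping only the non-trivial $A_i$ in $\mathcal I$. The telescoping identity $\prod_{i\in\mathcal I} A_i = \Gamma_{-1}$ immediately yields the second bullet. Moreover, a root $\alpha$ of a given $A_i$ is characterized by the vanishing of $sres_{y,j}(P,Q)$ for all $j<i$ together with $sres_{y,i}(P,Q)(\alpha)\neq 0$, so the $A_i$ have pairwise disjoint root sets; combined with the squarefree-ness of $\Gamma_{-1}$ this gives the disjointness needed in the first bullet.

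For the third bullet, I would invoke Lemma~\ref{lem:fund-prop-subres} applied at $\alpha$: as the leading coefficients of $P$ and $Q$ in $y$ do not both vanish at any $\alpha$ appearing in the decomposition (this is guaranteed in the setting of Section~\ref{sec:curve} because $H$ has constant leading coefficient in $y$), the first subresultant $Sres_{y,k}(P,Q)(\alpha,y)$ that does not vanish identically is of degree $k=i$ and equals $\gcd(P(\alpha,y),Q(\alpha,y))$ up to a nonzero scalar. Since $B_i = Sres_{y,i}(P,Q)$, this is exactly the claim.

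Finally, the set-theoretic identity in the first bullet follows: any solution $(\alpha,\beta)$ of $\{P,Q\}$ has $\alpha$ a root of $\mathrm{Res}_y(P,Q)$ and hence of a unique $A_i$, and by the third bullet $\beta$ is then a root of $B_i(\alpha,y)=\gcd(P(\alpha,y),Q(\alpha,y))$; conversely any $(\alpha,\beta)$ with $A_i(\alpha)=0=B_i(\alpha,\beta)$ satisfies $P(\alpha,\beta)=Q(\alpha,\beta)=0$ since $B_i(\alpha,y)$ divides both $P(\alpha,y)$ and $Q(\alpha,y)$. The one subtle point, which I expect to be the main thing to state explicitly rather than a genuine obstacle, is the leading-coefficient hypothesis of Lemma~\ref{lem:fund-prop-subres}: I would cite the standing assumption of Section~\ref{sec:curve} that $H$ has constant leading coefficient in $y$, which ensures the hypothesis for every $\alpha$ in the decomposition.
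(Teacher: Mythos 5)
Your proof follows the same route the paper alludes to: the paper offers no explicit proof of this lemma (it is cited to Gonzalez-Vega--El Kahoui and Li et al., and the text simply says the properties ``directly follow from the algorithm and Lemma~\ref{lem:fund-prop-subres}''), and your telescoping-plus-specialization argument is the natural way to fill that in. Two small remarks. First, the non-vanishing hypothesis of Lemma~\ref{lem:fund-prop-subres} should be attributed to the input specification of Algorithm~\ref{alg:tri-dec-mod} itself, which requires $Lc_y(P)$ and $Lc_y(Q)$ to be coprime and hence never simultaneously zero at any $\alpha$; pointing to the constant leading coefficient of $H$ in Section~\ref{sec:curve} would limit the lemma to that one application, whereas the statement is general and is also used with non-constant leading coefficients. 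Second, the identity $\prod_{i\in\mathcal I}A_i=\Gamma_{-1}$ is slightly stronger than needed (it requires noting that $G_{d_y(Q)}$ is a constant, which follows from the coprimality of the leading coefficients); for the second bullet it suffices to observe that the telescoping product $\prod_{i=1}^{k}A_i=G_0/G_k$ always divides the squarefree polynomial $G_0$, hence is itself squarefree.
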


\begin{algorithm}[t]
  \caption{Triangular decomposition \cite{VegKah:curve2d:96,Li-modpn-11}} 
\label{alg:tri-dec-mod}
\begin{algorithmic}[1]
  \REQUIRE{ $P,Q$ in $\mathbb{F}[x,y]$ coprime such that $Lc_y(P)$ and $Lc_y(Q)$ are
    coprime, $d_y(Q)\leq d_y(P)$}
  \ENSURE{Triangular decomp. 
  $\{(A_i(x),B_i(x,y))\}_{i\in\cal I}$ such that the set of solutions of $\{P,Q\})$ is the disjoint union
  of the sets of solutions of $\{A_i(x),B_i(x,y)\}_{i\in\cal I}$}
\STATE Compute the subresultant sequence of $P$ and $Q$ with respect to $y$: $B_i=Sres_{y,i}(P,Q)$
\STATE  $G_0=\text{squarefree part}({Res_y(P,Q)})$ and ${\calT}=\emptyset$ \label{line2-algo1}
\FOR {$i=1$ \TO $d_y(Q)$}
\STATE $G_i=\gcd(G_{i-1},sres_{y,i}(P,Q))$ \label{line4-algo1}

\STATE $A_i=G_{i-1}/G_i$ \label{line5-algo1}
\STATE if $d_x(A_i)>0$, add $(A_i,B_i)$ to ${\calT}$ 
\ENDFOR
\RETURN ${\calT}= \{(A_i(x),B_i(x,y))\}_{i\in{\cal I}}$
\end{algorithmic}
\end{algorithm}

\begin{algorithm}[t]
  \caption{Degree of the triangular decomposition } 
\label{alg:tri-dec-mod2}
\begin{algorithmic}[1]
  \REQUIRE{ $P,Q$ in $\mathbb{F}[x,y]$ coprime such that $Lc_y(P)$ and $Lc_y(Q)$ are
    coprime, $d_y(Q)\leq d_y(P)$}
  \ENSURE{The degree of the triangular decomposition of $\{P,Q\}$}
\STATE Compute the principal subresultant sequence of $P$ and $Q$ with respect to $y$: $sres_{y,i}(P,Q)$
\STATE  $G_0=\text{squarefree part}({Res_y(P,Q)})$ 
\FOR {$i=1$ \TO $d_y(Q)$}
\STATE $G_i=\gcd(G_{i-1},sres_{y,i}(P,Q))$\label{line4-algo2}

\ENDFOR
\RETURN $\sum_{i\in{\cal I}}(\deg(G_{i-1})-\deg(G_i))\,i$
\end{algorithmic}
\end{algorithm}

\smallskip
\noindent{\bf Degree of the triangular decomposition.}\quad
We call the \emph{degree of the triangular decomposition of $\{P,Q\}$,} 
the sum of the degrees of the triangular systems computed
by Algorithm~\ref{alg:tri-dec-mod}, that is,
\[\sum_{i\in\cal I}{\deg_x(A_i(x))\,\deg_y(B_i(x,y))}\] where $\deg_x$ refers to the degree of the
polynomial with respect to $x$ and similarly for $y$.
As we will see below, we only need the degree of the  triangular decomposition of some systems for computing the
number of critical points of $H$. 

We present in Algorithm~\ref{alg:tri-dec-mod2}  a slight variation of the triangular
decomposition algorithm in which we only compute the degree of the decomposition. 
Instead of computing the subresultant sequence $Sres_{y,i}(P,Q)$ of
$P$ and $Q$ as in Algorithm~\ref{alg:tri-dec-mod}, we only compute  the sequence of principal
subresultant coefficients of
$P$ and $Q$ (that is, the sequence of coefficients of the monomials of degree $i$ in $y$ in
$Sres_{y,i}(P,Q)$), which is sufficient for computing the degree of the decomposition. As we will
see, this decreases by a factor $d$  the arithmetic complexity in $\mathbb{F}$ of the algorithm, which is
critical for our global algorithm.\footnote{\small Note that, while  this complexity improvement does not
impact the bit complexity of computing the number of critical points of a curve $H$ over $\Z$, it is
critical when computing a lucky prime for $\{H,\frac{\partial H}{\partial y}\}$ where the number
of critical points is computed for $O(d^4+d^3\tau)$ systems defined over distinct  $\Z_\mu$  (Proposition~\ref{prop:comp-lucky}).}

\begin{lemma}\label{lem:tridec_pair}
Algorithm~\ref{alg:tri-dec-mod2} computes the degree of the triangular decomposition of $\{P,Q\}$.
If $P, Q\in\mathbb{F}[x,y]$ have degree at most $d$, the algorithm performs $\sO(d^3)$ arithmetic
operations in $\mathbb{F}$. If 
$P,Q\in\mathbb{Z}[x,y]$ ($\subset\mathbb{Q}[x,y]$) have degree at most $d$ and bitsize at most $\tau$,
the algorithm performs $\sOB(d^7+d^6\tau)$ bit 
operations in $\mathbb{Z}$. 
\end{lemma}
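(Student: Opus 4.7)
My plan is to split the argument into correctness, arithmetic complexity, and bit complexity.

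For \emph{correctness}, I first observe that Algorithms~\ref{alg:tri-dec-mod} and~\ref{alg:tri-dec-mod2} produce identical sequences $G_0,G_1,\ldots$, since the updates in Algorithm~\ref{alg:tri-dec-mod} (line~\ref{line2-algo1}--\ref{line5-algo1}) depend only on $\mathrm{Res}_y(P,Q)$ and the principal subresultant coefficients $sres_{y,i}(P,Q)$, which are exactly what Algorithm~\ref{alg:tri-dec-mod2} computes. Lemma~\ref{lem:tridec-correctness} then gives the triangular decomposition $\{(A_i,B_i)\}_{i\in{\cal I}}$ with $A_i=G_{i-1}/G_i$ and $B_i=Sres_{y,i}(P,Q)$, so $\deg_x(A_i)=\deg(G_{i-1})-\deg(G_i)$. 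To identify the returned sum with the degree of the decomposition I then show that $\deg_y(B_i)=i$ for every $i\in{\cal I}$: the inequality $\deg_y(B_i)\le i$ is by Definition~\ref{def:sub-resultant}, and the reverse follows from Lemma~\ref{lem:fund-prop-subres}, which says that for any root $\alpha$ of $A_i$, $B_i(\alpha,y)$ has $y$-degree exactly $i$. Hence $\sum_{i\in{\cal I}}(\deg(G_{i-1})-\deg(G_i))\,i=\sum_{i\in{\cal I}}\deg_x(A_i)\deg_y(B_i)$ is, by definition, the degree of the triangular decomposition.

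For the \emph{arithmetic complexity} over $\mathbb{F}$, Line~1 computes the principal subresultant coefficient sequence of $P,Q\in\mathbb{F}[x][y]$ in $\sO(d\cdot d^2)=\sO(d^3)$ operations in $\mathbb{F}$ by Lemma~\ref{complexity:subresultant} with $n=1$; Line~2 is the squarefree part of a univariate polynomial of degree $O(d^2)$, costing $\sO(d^2)$ operations by Lemma~\ref{complexity:gcd}; and the loop performs at most $d$ univariate gcds in $\mathbb{F}[x]$ between polynomials of degree $O(d^2)$, each at cost $\sO(d^2)$ by Lemma~\ref{complexity:gcd}, for a total of $\sO(d^3)$. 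Summing gives $\sO(d^3)$.

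For the \emph{bit complexity} with $P,Q\in\Z[x,y]$ of degree at most $d$ and bitsize at most $\tau$, Line~1 costs $\sOB(d\cdot d^3\tau)=\sOB(d^4\tau)$ bit operations by Lemma~\ref{complexity:subresultant}, which also bounds each $sres_{y,i}(P,Q)\in\Z[x]$ by $\deg_x=O(d^2)$ and bitsize $\sO(d\tau)$. The squarefree part of the resultant and the $O(d)$ gcds in Lines~\ref{line4-algo2}--5 are then univariate computations over $\Z$ between polynomials of degree $O(d^2)$ and bitsize $\sO(d\tau)$. Applying the standard $\sOB(n^3+n^2\sigma)$ bit-complexity bound for univariate integer gcd (carried out via the subresultant sequence in $x$) with $n=O(d^2)$ and $\sigma=\sO(d\tau)$ gives $\sOB(d^6+d^5\tau)$ per gcd and thus $\sOB(d^7+d^6\tau)$ for the loop, which absorbs the Line~1 cost.

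The main obstacle is the \emph{bit complexity} of the gcd loop: one must invoke an appropriate univariate-gcd bound over $\Z$ and control the bitsize of the intermediate divisors $G_i$ via Mignotte-type bounds (the $G_i$ divide the squarefree part of the resultant, a polynomial of degree $O(d^2)$ and bitsize $\sO(d\tau)$). The correctness portion is, by contrast, essentially immediate once one notices that the principal subresultant coefficients are exactly the data Algorithm~\ref{alg:tri-dec-mod} uses to update $G_i$, and that the equality $\deg_y(B_i)=i$ follows from Lemma~\ref{lem:fund-prop-subres}.
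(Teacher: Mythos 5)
Your proof is correct, and it diverges from the paper's in a meaningful way on the bit-complexity step. For correctness you carefully verify that both algorithms produce the same $G_i$ sequence and that $\deg_y(B_i)=i$, which the paper compresses to ``directly follows from Lemma~\ref{lem:tridec-correctness}''; the two arguments amount to the same thing since that lemma already asserts that $B_i(\alpha,y)$ has degree $i$ for any root $\alpha$ of $A_i$. The arithmetic complexity over $\mathbb F$ matches the paper verbatim. For the bit complexity over $\Z$, however, the paper simply observes that Algorithm~\ref{alg:tri-dec-mod2} does a subset of the work of Algorithm~\ref{alg:tri-dec-mod} and cites the proof of \cite[Thm.~19]{det-jsc-2009} for the $\sOB(d^7+d^6\tau)$ bound, whereas you re-derive the bound from scratch by bounding the subresultant coefficients (degree $O(d^2)$, bitsize $\sO(d\tau)$) and then applying a univariate integer-gcd bound together with Mignotte-type size control on the $G_i$. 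Your route is self-contained and makes the bottleneck (the $O(d)$ gcds on degree-$O(d^2)$ polynomials) explicit, which the paper's proof leaves implicit behind a citation.

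One small imprecision worth flagging: you write that the $O(d)$ gcds are ``between polynomials of degree $O(d^2)$ and bitsize $\sO(d\tau)$,'' but $G_0$ is the squarefree part of the resultant, i.e.\ a \emph{divisor} of a degree-$O(d^2)$, bitsize-$\sO(d\tau)$ polynomial, so by Mignotte it may have bitsize $\sO(d^2+d\tau)$ rather than $\sO(d\tau)$, and the same applies to every $G_i$. You do acknowledge this in your closing remark, and it does not affect the final bound: using the standard $\sOB(N^2\Sigma)$ subresultant-based gcd bound with $N=O(d^2)$ and the corrected $\Sigma=\sO(d^2+d\tau)$ gives $\sOB(d^6+d^5\tau)$ per gcd, identical to what you obtain with your stated $\sOB(N^3+N^2\sigma)$ and $\sigma=\sO(d\tau)$, so the $d$ gcds still total $\sOB(d^7+d^6\tau)$.
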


\begin{proof}
The correctness of Algorithm~\ref{alg:tri-dec-mod2} directly follows from
Lemma~\ref{lem:tridec-correctness}. Concerning the complexity, the resultant and the sequence of the principal subresultant coefficients of $P$ and $Q$ can be computed in
  $\sO(d^3)$ arithmetic operations, and each of these principal subresultants (including the
  resultant) has degree in $\OO(d^2)$, by  Lemma \ref{complexity:subresultant} (note that this lemma is stated for
  the coefficient ring $\Z$, but the arithmetic complexity is the same for any
  field $\mathbb{F}$).
  The algorithm performs at most $d$ gcd computations between these polynomials. 
  The arithmetic complexity of one such gcd computation is softly
  linear in their degrees, that is $\sO(d^2)$
  (Lemma~\ref{complexity:gcd}). Hence the complexity of computing 
  all the gcds is in $\sO(d^3)$. 
The bit complexity over $\mathbb{Z}$ Algorithm~\ref{alg:tri-dec-mod2} is bounded by that of
  Algorithm~\ref{alg:tri-dec-mod} which is in $\sOB(d^7+d^6\tau)$ according to the proof of
  \cite[Thm. 19]{det-jsc-2009}.\footnote{\small Note that this bound is not an obvious overestimate because  known bounds yield a complexity of $\sOB(d^7+d^6\tau)$ for all the gcd computations
  in Line~\ref{line4-algo1} of Algorithm~\ref{alg:tri-dec-mod}, which is the
  same for Line~\ref{line4-algo2} of Algorithm~\ref{alg:tri-dec-mod2}.}
\end{proof}

\begin{lemma}\label{lem:mult} 
The degree of the triangular decomposition of $\{P,Q\}$ is equal to the sum, over all distinct
solutions $(\alpha,\beta)$ of $\{P,Q\}$,  of the multiplicities of $\beta$ in $\gcd(P(\alpha,y),Q(\alpha,y))$. 
\end{lemma}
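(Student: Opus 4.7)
The plan is to double-count the distinct solutions of $\{P,Q\}$ (weighted by their gcd-multiplicity) using the triangular decomposition, and match that count with the algebraic sum $\sum_{i\in\cal I} \deg_x(A_i)\deg_y(B_i)$ that defines the degree of the decomposition.

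First, I would establish two structural facts about the output of Algorithm~\ref{alg:tri-dec-mod}. Because $\prod_{i\in\cal I} A_i$ is squarefree by Lemma~\ref{lem:tridec-correctness}, each individual $A_i$ is squarefree and therefore has exactly $\deg_x(A_i)$ distinct roots. On the $B_i$ side, Definition~\ref{def:sub-resultant} forces $\deg_y(B_i)\leq i$, whereas the third bullet of Lemma~\ref{lem:tridec-correctness} produces, for any root $\alpha$ of $A_i$ (and such a root exists since $i\in\cal I$ means $\deg_x(A_i)>0$), a specialization $B_i(\alpha,y)$ of degree exactly $i$; combining these two inequalities yields $\deg_y(B_i)=i$.

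Second, I would rewrite each term of the decomposition degree as a double sum over solutions. For any $\alpha$ with $A_i(\alpha)=0$, Lemma~\ref{lem:tridec-correctness} identifies $B_i(\alpha,y)$ with $\gcd(P(\alpha,y),Q(\alpha,y))$ up to a nonzero constant, and this gcd has degree $i$. Since the multiplicities of the distinct roots of a univariate polynomial of degree $i$ sum to $i$, we get
\[
\deg_x(A_i)\cdot\deg_y(B_i)\;=\;\sum_{\alpha:\,A_i(\alpha)=0}\;\sum_{\beta} m(\alpha,\beta),
\]
where $m(\alpha,\beta)$ denotes the multiplicity of $\beta$ as a root of $\gcd(P(\alpha,y),Q(\alpha,y))$.

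Finally, I would sum over $i\in\cal I$ and invoke the disjoint-union bullet of Lemma~\ref{lem:tridec-correctness}: the pairs $(\alpha,\beta)$ appearing on the right-hand side, as $i$ ranges over $\cal I$, are precisely the distinct solutions of $\{P,Q\}$, each enumerated exactly once (since each $\alpha$ is a root of a unique $A_i$, and then $\beta$ ranges over the roots of $B_i(\alpha,y)=\gcd(P(\alpha,y),Q(\alpha,y))$). This yields the claimed identity. The only delicate point is upgrading the pointwise degree statement of Lemma~\ref{lem:tridec-correctness} into the global identity $\deg_y(B_i)=i$; once that is in hand, the remainder is straightforward bookkeeping with the three bullets of Lemma~\ref{lem:tridec-correctness}.
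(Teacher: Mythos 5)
Your proof is correct and follows essentially the same route as the paper's: both use the three bullets of Lemma~\ref{lem:tridec-correctness} (disjointness, squarefreeness of $\prod A_i$, and the gcd identification of $B_i(\alpha,y)$) to turn $\sum_i \deg_x(A_i)\deg_y(B_i)$ into a sum of gcd-multiplicities over the distinct solutions. You are somewhat more careful than the paper in explicitly pinning down $\deg_y(B_i)=i$, a point the paper's terse proof leaves implicit.
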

\begin{proof}
By Lemma~\ref{lem:tridec-correctness}, the sets of solutions of the systems of the triangular decomposition of
Algorithm~\ref{alg:tri-dec-mod} are disjoint and polynomials $A_i$ are squarefree. The degree of the
triangular decomposition of $\{P,Q\}$ is thus 
\[\sum_{i\in\cal I}{\deg_x(A_i(x))\,\deg_y(B_i(x,y))}= 
\sum_{(\alpha,\beta) \in V}\text{mult}(\beta,B_i(\alpha,y)),\]
where $V$ is the set of solutions of $\{P,Q\}$ and $\text{mult}(\beta,B_i(\alpha,y))$ denotes the multiplicity of $\beta$ in $B_i(\alpha,y)$.
The result follows since  $B_i(\alpha,y)= \gcd(P(\alpha,y),Q(\alpha,y))$ by Lemma~\ref{lem:tridec-correctness}.
\end{proof}

\noindent{\bf Number of critical points of $H$.}\quad
Algorithm~\ref{alg:rad-tri-dec} computes the number of critical points of $H$  as the difference between the
degree of the triangular decompositions of the systems $\{H,(\frac{\partial H}{\partial y})^2\}$ and $\{H,\frac{\partial
  H}{\partial y}\}$. We first prove the correctness of this algorithm and then its complexity. 

\begin{algorithm}[t]
  \caption{Number of critical points of $H$}
\label{alg:rad-tri-dec}
\begin{algorithmic}[1]
  \REQUIRE{$H$ in $\mathbb{F}[x,y]$ squarefree such that $Lc_y(H) \in \mathbb{F}$}
  \ENSURE{ The number of critical points of $H$}

\smallskip
\RETURN  Algo \ref{alg:tri-dec-mod2} $(H,(\frac{\partial H}{\partial y})^2)$ - Algo \ref{alg:tri-dec-mod2} $(H,\frac{\partial H}{\partial y})$

\end{algorithmic}
\end{algorithm}

\begin{proposition}\label{prop:proof-correctness}
  Algorithm \ref{alg:rad-tri-dec} computes the number of critical points of $H$.
If $H\in\mathbb{F}[x,y]$ has degree $d$, the algorithm performs $\sO(d^3)$ arithmetic
operations in $\mathbb{F}$. If 
$H\in\mathbb{Z}[x,y]$ ($\subset\mathbb{Q}[x,y]$) has degree $d$ and bitsize $\tau$,
the algorithm performs $\sOB(d^7+d^6\tau)$ bit 
operations in $\mathbb{Z}$. 

\end{proposition}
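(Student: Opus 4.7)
The plan is to prove correctness via a fiber-by-fiber multiplicity analysis and then derive the complexity from Lemma~\ref{lem:tridec_pair}. The main insight---and the place where the algorithm's cleverness lives---is that squaring $\partial H/\partial y$ produces a controlled shift in the gcd exponent at each point of the common zero locus of $H$ and $\partial H/\partial y$: the shift equals $1$ at critical points and $0$ elsewhere. Turning this into a proof is the main task; the complexity estimate will then be a direct application of Lemma~\ref{lem:tridec_pair}.

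First I would apply Lemma~\ref{lem:mult} to each of the two triangular decompositions computed by Algorithm~\ref{alg:rad-tri-dec}: this rewrites each degree as the sum, over all distinct solutions $(\alpha,\beta)$ of the corresponding system, of the multiplicity of $\beta$ in the associated fiber gcd. The two systems $\{H,\partial H/\partial y\}$ and $\{H,(\partial H/\partial y)^2\}$ have exactly the same zero set---the critical points of $H$---so the two sums range over the same index set. Fixing $\alpha$ and using that $Lc_y(H)$ is a nonzero constant, I would factor $H(\alpha,y)=c\prod_i(y-\beta_i)^{m_i}$ in the algebraic closure with the $\beta_i$ distinct and $\sum_i m_i=d$. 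In characteristic zero or sufficiently large, each $\beta_i$ is a root of $\partial H/\partial y(\alpha,y)$ of multiplicity exactly $m_i-1$, giving
\[\gcd\bigl(H(\alpha,y),\tfrac{\partial H}{\partial y}(\alpha,y)\bigr)=\prod_i(y-\beta_i)^{m_i-1},\]
\[\gcd\bigl(H(\alpha,y),(\tfrac{\partial H}{\partial y})^2(\alpha,y)\bigr)=\prod_i(y-\beta_i)^{\min(m_i,\,2(m_i-1))}.\]
A case split on $m_i=1$ versus $m_i\geq 2$ shows the degree of the second gcd exceeds that of the first by exactly $\#\{i:m_i\geq 2\}$---the number of critical points in the fiber. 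Summing over $\alpha$ and applying Lemma~\ref{lem:mult} yields the total count.

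For the complexity, I would invoke Lemma~\ref{lem:tridec_pair} on each call. The polynomial $\partial H/\partial y$ has degree at most $d$ and bitsize $\sO(\tau)$, while $(\partial H/\partial y)^2$ has degree at most $2d$ and bitsize still $\sO(\tau)$, since squaring a polynomial of degree $d$ and bitsize $\tau$ yields coefficients that are sums of $\OO(d^2)$ products of bitsize-$\tau$ entries. The preconditions of Algorithm~\ref{alg:tri-dec-mod2} are met: $H$ is squarefree and hence coprime to both $\partial H/\partial y$ and its square, and the leading coefficients in $y$ are nonzero constants, hence coprime. The ordering $d_y(Q)\leq d_y(P)$ may fail for the second call when $d\geq 3$, but by Lemma~\ref{lem:tridec-correctness} the output of Algorithm~\ref{alg:tri-dec-mod2} equals $\sum_\alpha\deg\gcd(P(\alpha,y),Q(\alpha,y))$ and so is symmetric in $P$ and $Q$; swapping the two arguments is therefore harmless. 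Each call then costs $\sO(d^3)$ arithmetic operations over $\mathbb{F}$ and $\sOB(d^7+d^6\tau)$ bit operations over $\Z$, matching the claimed bounds.
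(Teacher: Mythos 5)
Your proof is correct and follows essentially the same route as the paper's: apply Lemma~\ref{lem:mult} to both decompositions, observe that the two systems share the same zero set, and use the multiplicity shift $m\mapsto m-1$ under $\partial/\partial y$ (hence $m\mapsto 2m-2$ under squaring) to see that $\min(m,2m-2)-(m-1)=1$ exactly when $m\geq 2$. Your explicit $m_i=1$ case split and the remark about swapping arguments so that $d_y(Q)\leq d_y(P)$ holds are small refinements the paper leaves implicit, but the substance is identical.
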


\begin{proof}
We first prove that for any critical point $(\alpha,\beta)$ of $H$, the multiplicity of $\beta$ in
$\gcd(H(\alpha,y),(\frac{\partial H}{\partial y})^2(\alpha,y))$ is greater by one than the
multiplicity of $\beta$ in  $\gcd(H(\alpha,y),$ $\frac{\partial H}{\partial y}(\alpha,y))$. 
Since $(\alpha,\beta)$ is a critical point of $H$, it is  solution of both the systems $\{H,\frac{\partial H}{\partial y}\}$ and $\{H,(\frac{\partial H}{\partial y})^2\}$. This implies that $\beta$ is a root of both $\gcd(H(\alpha,y),\frac{\partial H}{\partial y}(\alpha,y))$ and $\gcd(H(\alpha,y),(\frac{\partial H}{\partial y})^2(\alpha,y))$. If $m$  is the multiplicity of $\beta$ in $H(\alpha,y)$ then $\beta$ has multiplicity $m-1$ in $\frac{\partial H}{\partial y}(\alpha,y)$ and thus, that it has multiplicity $2m-2$ in $(\frac{\partial H}{\partial y})^2$. It follows that
$\beta$ has multiplicity $m-1$ in $\gcd(H(\alpha,y),\frac{\partial H}{\partial y}(\alpha,y))$ and
$m$ in $\gcd(H(\alpha,y),(\frac{\partial H}{\partial y})^2(\alpha,y))$ because $m\leq 2m-2$, that is
$m-1\geq 1$, since
$\beta$ is solution of $\frac{\partial H}{\partial y}(\alpha,y)$.

We denote  the multiplicity of $\beta$ in $\gcd(P(\alpha,y),Q(\alpha,y))$  as $\text{mult}(\beta,\gcd(P(\alpha,y),Q(\alpha,y)))$.
Summing 
over all the critical points of $H$ and noticing that the set  $V_H$ of distinct solutions of $\{
H,\frac{\partial H}{\partial y}\}$ is the same as that of $\{ H,(\frac{\partial H}{\partial
y})^2\}$, we obtain that the number of critical points is
\[\begin{split}
\#V_H= &
 \sum_{(\alpha,\beta) \in V_H}{\text{mult}(\beta,\gcd(H(\alpha,y),(\frac{\partial H}{\partial y})^2(\alpha,y)))}\\
&-\sum_{(\alpha,\beta) \in V_H}{\text{mult}(\beta,\gcd(H(\alpha,y),\frac{\partial H}{\partial
y}(\alpha,y)))},
\end{split}\]
which is equal, by Lemma~\ref{lem:mult},  to the difference of the degrees of the decompositions of $\{ H,(\frac{\partial H}{\partial y})^2\}$ and $\{ H,\frac{\partial
  H}{\partial y}\}$. These degrees are  computed by Algorithm \ref{alg:tri-dec-mod2}, which
  concludes the proof of correctness of Algorithm~\ref{alg:rad-tri-dec}.

The complexity analysis of the algorithm directly follows from Lemma~\ref{lem:tridec_pair} noticing that
$\frac{\partial H}{\partial y}$ and $(\frac{\partial H}{\partial y})^2$ have degrees at most $2d$
(and bitsizes in $\OO(d+\tau)$ when defined over $\Z$) and that $(\frac{\partial H}{\partial y})^2$ can be computed from
$\frac{\partial H}{\partial y}$ in 
complexity $\sO(d^2)$ (and $\sOB(d^2\tau)$ when defined over $\Z$) \cite[Cor. 8.28]{vzGGer2}.
\end{proof}

\subsection{Lucky prime}

In Algorithm~\ref{algo:lucky}, we compute a lucky prime for $\{ H,\frac{\partial H}{\partial y} \}$ (see Definition
\ref{def:lucky-nb-roots}) in a straightforward manner by first computing the number of distinct
solutions of the system and then by computing the number of solutions
of its image modulo distinct prime numbers $\mu$ until the same number of solutions is found (and
checking that some leading coefficients do not vanish modulo $\mu$). Note that
Algorithm~\ref{algo:lucky} is a simplified variant of \blue{\cite[Algorithm 3]{bouzidiJSC2014a}} 
where we use here   the knowledge of the number of critical points of $H$
 to avoid  computing an explicit bound on
the number of unlucky primes.

\begin{algorithm}[t]
  \caption{Lucky prime for  $\{H,\frac{\partial H}{\partial y}\}$} 
\label{algo:lucky}
\begin{algorithmic}[1]

  \REQUIRE{ $H$ in $\Z[X,Y]$ such that $Lc_y(H) \in \Z$}

\ENSURE{A lucky prime $\mu$ for the system $\{H,\frac{\partial H}{\partial y}\}$}

\STATE $N$= Algorithm~\ref{alg:rad-tri-dec} ($H$) \label{line-distinct}

\STATE Compute $H(t-sy,y)$ and $ \frac{\partial H}{\partial y}(t-sy,y)\quad $
\label{alg:nbrootZ1}

\STATE $m=2d^4$

\WHILE {true} \label{alg:nbrootZ3}

\STATE Compute the set $B$ of the first $d^4+d^3\tau$ primes $>m$ 

\FORALL {$\mu$ in $B$}

\STATE Compute the reduction mod. $\mu$ of $H$, $\frac{\partial H}{\partial y}$, $L_H$,  $L_\frac{\partial
H}{\partial y}$\label{alg:nbrootZ6.1}

\IF{  $\phi_\mu(L_H(s))\ \phi_\mu(L_{\frac{\partial H}{\partial y}}(s)) \not\equiv 0$ \label{alg:nbrootZ4}} 
 \STATE Compute ${N}_\mu=$  Algorithm \ref{alg:rad-tri-dec}$(\phi_\mu(H),\phi_\mu(\frac{\partial H}{\partial y}))$\label{alg:nbrootZ5}
 \IF {${{N}}_\mu = N$\label{alg:nbrootZ6}} 
 \RETURN $\mu$
 \ENDIF
   \ENDIF
\ENDFOR
 
\STATE $m=$ the largest prime in $B$  

\ENDWHILE 

\end{algorithmic}
\end{algorithm}

\begin{proposition}\label{prop:comp-lucky}
Given $H\in\Z[x,y]$ of degree $d$ and bitsize $\tau$, 
Algorithm~\ref{algo:lucky} computes a lucky prime  for $\{H,\frac{\partial H}{\partial y} \}$ using $\sOB(d^7+d^6\tau)$ bit operations.
\end{proposition}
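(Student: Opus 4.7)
The plan is to verify correctness first and then analyze the termination and cost of the while loop.

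\emph{Correctness.}~I would argue that Algorithm~\ref{algo:lucky} returns $\mu$ only when the three conditions of Definition~\ref{def:lucky-nb-roots} hold simultaneously: $\mu>2d^4$ (true because $m$ is initialized to $2d^4$ and strictly increases), $\phi_\mu(L_H(s))\phi_\mu(L_{\partial H/\partial y}(s))\not\equiv 0$ (checked at Line~\ref{alg:nbrootZ4}), and the system $\{\phi_\mu(H),\phi_\mu(\partial H/\partial y)\}$ has the same number $N_\mu=N$ of distinct critical points as $\{H,\partial H/\partial y\}$ (checked at Line~\ref{alg:nbrootZ6}, using the correctness of Algorithm~\ref{alg:rad-tri-dec} proved in Proposition~\ref{prop:proof-correctness}).

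\emph{Termination and number of iterations.}~I would bound the number of primes $\mu>2d^4$ that fail one of the two conditions. A prime $\mu$ fails the leading-coefficient condition iff it divides every integer coefficient of $L_H(s)$ or $L_{\partial H/\partial y}(s)$; since these are polynomials in $s$ of degree $O(d)$ and bitsize $\sO(\tau)$, the product of such primes has bitsize $\sO(d+\tau)$. A prime $\mu$ fails the count condition iff it divides a fixed non-zero integer attached to the system (built from the leading coefficient of the resultant of $H$ and $\partial H/\partial y$ and from a discriminant-like quantity separating the distinct critical points); this integer has bitsize bounded by $\sO(d^4+d^3\tau)$ by the standard bounds on resultants of bivariate polynomials of degree $d$ and bitsize $\tau$, as already exploited in \blue{\cite{bouzidiJSC2014a}}. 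Consequently there are at most $O(d^4+d^3\tau)$ unlucky primes above $2d^4$, so the \textbf{while} loop terminates after at most two batches, i.e.\ after testing $k=O(d^4+d^3\tau)$ primes, each of bitsize $\sO(\log d+\log\tau)=\sO(1)$ (absorbed in the $\sO$ notation).

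\emph{Complexity.}~Line~\ref{line-distinct} costs $\sOB(d^7+d^6\tau)$ by Proposition~\ref{prop:proof-correctness}, and the shearing at Line~\ref{alg:nbrootZ1} costs $\sOB(d^4+d^3\tau)$. The $O(d^2)$ coefficients of $H$ (and of $\partial H/\partial y$, $L_H$, $L_{\partial H/\partial y}$), each of bitsize $\sO(\tau)$, must be reduced modulo the $k=O(d^4+d^3\tau)$ primes of the batch; using a remainder-tree, the total cost is $\sO(d^2(\tau+k))=\sO(d^6+d^5\tau)$. For each prime $\mu$, checking $\phi_\mu(L_H(s))\phi_\mu(L_{\partial H/\partial y}(s))\not\equiv 0$ is essentially free, and running Algorithm~\ref{alg:rad-tri-dec} modulo $\mu$ uses $\sO(d^3)$ arithmetic operations in $\Z_\mu$ by Proposition~\ref{prop:proof-correctness}, i.e.\ $\sO(d^3)$ bit operations (polylog factors in $\mu$ absorbed). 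Summing over $k=O(d^4+d^3\tau)$ primes gives $\sO(d^7+d^6\tau)$, which dominates every other step.

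The main obstacle is identifying and bounding the integer whose prime divisors are exactly the ``unlucky'' primes for the count condition, so as to guarantee at most $O(d^4+d^3\tau)$ bad primes and hence a single (or at most two) batches of the \textbf{while} loop; the rest is routine bit-complexity accounting together with invocations of Proposition~\ref{prop:proof-correctness} over $\Z$ and over $\Z_\mu$.
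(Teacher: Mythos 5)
Your proof is correct and follows essentially the same approach as the paper: verify correctness from Definition~\ref{def:lucky-nb-roots}, bound the number of unlucky primes by $\sO(d^4+d^3\tau)$ (the paper cites \cite[Prop.~13]{bouzidiJSC2014a} for this), reduce the coefficients modulo all primes in a batch via a remainder tree, and account for the $\sOB(d^3)$ cost of each call to Algorithm~\ref{alg:rad-tri-dec} over $\Z_\mu$, yielding a dominant term of $\sOB(d^7+d^6\tau)$. The one minor imprecision is the claim that the \textbf{while} loop needs at most two batches: the bound from \cite{bouzidiJSC2014a} is $\sO(d^4+d^3\tau)$ (hiding polylogarithmic factors), so the number of iterations is $\OO(\log d\tau)$ rather than $2$; this is absorbed by the $\sOB$ notation and does not affect the final complexity.
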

\begin{proof}
The correctness of Algorithm~\ref{algo:lucky} follows directly from the fact that the number of unlucky primes is finite 
(see~\blue{\cite[Prop. 13]{bouzidiJSC2014a}}). 

We now analyze the complexity of the algorithm. Computing the number of critical points of
$H$ in Line~\ref{line-distinct} has complexity $\sOB(d^7+d^6\tau)$ by
Proposition~\ref{prop:proof-correctness}.
It is
straightforward that the computations in Line~\ref{alg:nbrootZ1} can be done in bit complexity
$\sOB(d^4+d^3\tau)$ (see e.g. \blue{\cite[Lemma 7]{bouzidiJSC2014a}}). 
There are \blue{$\OO(\log d\tau)$} iterations of the loop in Line~\ref{alg:nbrootZ3} because there are
$\sO(d^4+d^3\tau)$ unlucky primes \blue{\cite[Prop. 13]{bouzidiJSC2014a}}. 
Each iteration of this loop consists in testing, for
the $d^4+d^3\tau$ 
primes in $B$, the non-vanishing of the reduction of the two polynomials
$L_H(s)$ and $L_{\frac{\partial H}{\partial y}}(s)$ and the equality between the number of
solution over $\Z$ and its analogue over $\Z_\mu$.

Polynomials $H$, $\frac{\partial H}{\partial y}$, $L_H$ and $L_\frac{\partial
H}{\partial y}$ are of degree at most $d$ in one or two variables and
they have bitsize at most $\sO(d+\tau)$ (see e.g. \blue{\cite[Lemma 7]{bouzidiJSC2014a}}).
The reduction of all their $O(d^2)$ coefficients modulo all the primes in $B$ can be computed via a remainder tree in a bit
  complexity that is soft linear in the total bitsize of the input
\cite[Thm. 1]{moenck1974}, which is dominated by the sum of the
    bitsizes of the $d^4+d^3\tau$ primes 
 in $B$ each being of bitsize
     \blue{$\OO(\log d\tau)$} (since there are \blue{$\OO(\log d\tau)$} iterations of the loop in Line~\ref{alg:nbrootZ3}). Hence, the bit complexity of  Line~\ref{alg:nbrootZ6.1} is~$\sOB(d^4+d^3\tau)$.

Finally, the arithmetic complexity of
     Algorithm \ref{alg:rad-tri-dec} is in $\sO(d^3)$, 
by Lemma~\ref{prop:proof-correctness},  thus its bit complexity is also in $\sOB(d^3)$
     since \blue{$\mu \in \OO(\log d\tau)$}.  Hence, the total bit complexity of Line~\ref{alg:nbrootZ5} is
     $\sOB(d^7+d^6\tau)$, and so is the bit complexity of one iteration of the loop in
     Line~\ref{alg:nbrootZ3}. Since at most \blue{$\OO(\log d\tau)$} iterations are performed, this yields an
     overall bit complexity for Algorithm \ref{algo:lucky} in $\sOB(d^7+d^6\tau)$.
\end{proof}

\section{Wrap up}
The results of the previous sections can easily be combined in the following theorem.

\begin{theorem}\label{thm:final}
Let $P,Q \in \Z[x,y]$ of total degree at most $d$ and maximum bitsize $\tau$. A separating linear form $x+ay$ for $\{P,Q\}$ with $a$ an integer of bitsize in \blue{$\OO(\log d)$}  can be computed using $\sOB(d^7+d^6\tau)$ bit operations.
\end{theorem}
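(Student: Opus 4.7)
The plan is to assemble the three building blocks established in Sections~\ref{sec:sep-form}--\ref{sec:curve}, applied to the curve system obtained from $\{P,Q\}$ via Lemma~\ref{lem:system2curve}. Concretely, I would proceed in four steps.

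\textbf{Step 1 (reduction to a curve).} Invoke Lemma~\ref{lem:system2curve} on the input $\{P,Q\}$. This produces, in bit complexity $\sOB(d^6+d^5\tau)$, an integer $\alpha$ of magnitude $O(d)$ and a polynomial $H\in\Z[t,y]$ of degree at most $2d$ and bitsize in $\sO(d+\tau)$, whose leading coefficient in $y$ is constant, such that $\{H,\partial H/\partial y\}$ is zero-dimensional and any separating linear form for the critical points of $H$ yields, after undoing the shearing $(x,y)\leadsto(t-\alpha y,y)$, a separating linear form for $\{P,Q\}$.

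\textbf{Step 2 (number of critical points and lucky prime).} Apply Algorithm~\ref{algo:lucky} to $H$. By Proposition~\ref{prop:comp-lucky}, this returns both the number $N$ of critical points of $H$ and a prime $\mu$ of bitsize $\OO(\log d\tau)$ that is lucky for $\{H,\partial H/\partial y\}$ in the sense of Definition~\ref{def:lucky-nb-roots}. Plugging the parameters of $H$ (degree $\leq 2d$, bitsize $\sO(d+\tau)$) into the complexity bound of Proposition~\ref{prop:comp-lucky} gives $\sOB((2d)^7+(2d)^6(d+\tau)) = \sOB(d^7+d^6\tau)$. This is the dominant step.

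\textbf{Step 3 (separating form for the curve).} Feed the triple $(H,\partial H/\partial y,N,\mu)$ into Algorithm~\ref{alg:sep-elem}. By Proposition~\ref{prop:sep-elem-comp}, this outputs in bit complexity $\sOB((2d)^7+(2d)^3(d+\tau)) = \sOB(d^7+d^3\tau)$ an integer $a'<2(2d)^4=32\,d^4$ such that $t+a'y$ is separating for $\{H,\partial H/\partial y\}$, i.e.\ for the critical points of~$H$.

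\textbf{Step 4 (shearing back).} By Lemma~\ref{lem:system2curve}, the form $t+a'y$, viewed in the original coordinates via $t=x+\alpha y$, becomes $x+(\alpha+a')y$, which is separating for $\{P,Q\}$. Setting $a=\alpha+a'$ gives an integer in $O(d^4)$, hence of bitsize $\OO(\log d)$. Summing the three complexity contributions above, the total cost is $\sOB(d^7+d^6\tau)$, which proves the theorem.

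The proof proposal is essentially a bookkeeping exercise; the only mild subtlety is to verify that substituting the parameters of $H$ (degree $2d$ instead of $d$, bitsize $\sO(d+\tau)$ instead of $\tau$) into Propositions~\ref{prop:sep-elem-comp} and~\ref{prop:comp-lucky} still yields complexities absorbed in the target bound $\sOB(d^7+d^6\tau)$, and to check that the output bound $a<2d^4$ of Algorithm~\ref{alg:sep-elem} (applied to degree $2d$) combined with the integer $\alpha$ of magnitude $O(d)$ from the shearing remains of bitsize $\OO(\log d)$.
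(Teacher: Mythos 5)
Your proposal follows exactly the same route as the paper's own proof: reduce to a curve via Lemma~\ref{lem:system2curve}, compute the number of critical points and a lucky prime via Propositions~\ref{prop:proof-correctness} and~\ref{prop:comp-lucky} (which you correctly fold into the call to Algorithm~\ref{algo:lucky}), then run Algorithm~\ref{alg:sep-elem} and shear back. The only difference is that you spell out the degree-$2d$ / bitsize-$\sO(d+\tau)$ substitutions explicitly, which the paper leaves implicit; the argument and complexity accounting are otherwise identical.
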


\begin{proof}
According to Lemma~\ref{lem:system2curve}, we can compute in complexity $\sOB(d^6+d^5\tau)$ a
\blue{shearing} $t=x+\alpha y$, $\alpha\in O(d)$, and a polynomial $H \in \Z[t,y]$ of total degree at most
$2d$ and bitsize  $\sO(d+\tau)$ such that $\{H,\frac{\partial H}{\partial y}\}$ is
zero-dimensional and such that $x+(\alpha+a)y$ is separating for
$\{P,Q\}$ if $t+ay$ is separating for $\{H,\frac{\partial H}{\partial y}\}$. The result follows since by
Propositions~\ref{prop:sep-elem-comp},~\ref{prop:proof-correctness}
and~\ref{prop:comp-lucky}, since an
 integer $a$ in $\OO(d^4)$ such that $t+ay$ separates $\{H,\frac{\partial H}{\partial y}\}$ can be computed using  $\sOB(d^7+d^6\tau)$ bit operations.
\end{proof}

\section{\blue{Las-Vegas algorithm}} 
\label{sec:Las-Vegas}

\blue{In this section, we present a Las-Vegas version of the algorithm presented in the
previous sections, whose expected bit complexity is $\sOB(d^5+d^4\tau)$ 
(Theorem~\ref{thm:las-vegas}).}


The Las-Vegas version of our algorithm is the same as the deterministic one except that we use
Las-Vegas algorithms for gcd computations and that we choose randomly candidates for a separating linear
form and a lucky prime
 in Algorithms~\ref{alg:sep-elem} and \ref{algo:lucky}.

 More precisely, \blue{in the Las-Vegas version of}
 Algorithm~\ref{alg:sep-elem}, the separating linear form is computed by
 choosing 
 at random an integer $a$ in $[0,4d^4]$ until a candidate satisfying the
 condition of Line~\ref{alg:sep9} is found. There are at most $2d^4$ integers
 that do not satisfy this condition,\footnote{\small Indeed, $\Upsilon$ is of
   degree at most $2d$ and the system $\{P_\mu,Q_\mu\}$ has at most $d^2$
   solutions which define at most $d^2\choose 2$ directions in which two
   solutions are aligned. Furthermore $2d+{d^2\choose 2}< 2d^4$ (for $d\geq
   2$).}  thus a good candidate is chosen with probability at least $\frac12$,
 and so at most 2 candidates are chosen on average.

\blue{In the Las-Vegas version of} Algorithm~\ref{algo:lucky}, we  first compute a set $B$ of $2m$ prime numbers where $m$ is an
upper bound on the number of unlucky primes for $\{H,\frac{\partial H}{\partial y}\}$; such 
a set $B$ can be computed in bit complexity $\sOB(d^4+d^3\tau)$ (see the proof of~\blue{\cite[Lemma 18]{bouzidiJSC2014a}}).  
Then, we iteratively choose at random a prime number $\mu$ in $B$ until the conditions of
Algorithm~\ref{algo:lucky} are satisfied (Lines~\ref{alg:nbrootZ4} and \ref{alg:nbrootZ6}). 
The primes not satisfying these conditions are the unlucky ones, by definition, thus a lucky prime
is found with probability at least $\frac12$, and so at most 2 candidates are
chosen on average.

It remains to prove that the expected bit complexity of Algorithms~\ref{alg:sep-elem},
\ref{alg:rad-tri-dec}, and \ref{algo:lucky}, as well as the initial \blue{shearing} of
the coordinate systems, are in $\sOB(d^5+d^4\tau)$.
Our analysis is based on the following result on the expected complexity  of gcd computations.

\begin{lemma}[{\cite[Cor. 11.11]{vzGGer2}}]\label{lem:expected-gcd}
Let $f,g \in \Z[x]$ of degree at most $d$ and maximum bitsize $\tau$. The $\gcd$ of $f$ and $g$ can be computed using an expected number of $\sOB(d^2+d\tau)$ bit operations.
\end{lemma}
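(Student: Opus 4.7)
The plan is to compute $\gcd(f,g)$ by a modular approach: compute modular gcds over several finite fields $\Z_\mu$ and recombine them via the Chinese Remainder Theorem. First I would invoke Mignotte's factor bound to observe that every divisor of $f$ (or $g$), and thus their gcd, has degree at most $d$ and coefficients of bitsize $\OO(d+\tau)$. Consequently the answer is encoded by $\sO(d(d+\tau))$ bits, which sets the target the algorithm must match.

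Next, I would choose small primes $\mu$ of bitsize $\OO(\log(d\tau))$ uniformly at random from a set $B$ of polynomial size. A prime is \emph{bad} if it divides $\mathrm{lc}(f)\,\mathrm{lc}(g)$, or if $\deg(\gcd(\phi_\mu(f),\phi_\mu(g)))$ strictly exceeds $\deg(\gcd(f,g))$; the latter primes are exactly the divisors of a certain nonzero subresultant coefficient of bitsize $\OO(d\tau)$, so the number of bad primes is polynomial in $d$ and $\tau$ and one can pick $|B|=\Omega(d\tau)$ so that at least half the primes in $B$ are good. For each chosen $\mu$, reduce $f$ and $g$ modulo $\mu$ and compute $\gcd(\phi_\mu(f),\phi_\mu(g))$ using the Knuth--Sch\"onhage Half-GCD algorithm in $\sO(d)$ operations in $\Z_\mu$, that is, $\sOB(d)$ bit operations per prime. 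Recognise good primes on the fly as those yielding the smallest gcd-degree observed so far; after accumulating $\OO((d+\tau)/\log\mu)$ consistent good reductions, apply CRT to reconstruct a candidate $g^\ast\in\Z[x]$, and finally certify the answer by verifying, via fast Euclidean division, that $g^\ast$ divides both $f$ and $g$.

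For the cost accounting: batch reduction of $f$ and $g$ modulo all primes using a remainder tree is in $\sOB(d^2+d\tau)$; the modular gcds together cost $\sOB(d)\cdot\OO((d+\tau)/\log\mu) = \sOB(d^2+d\tau)$; CRT reconstruction of a polynomial of degree $\leq d$ and bitsize $\OO(d+\tau)$ is again $\sOB(d^2+d\tau)$; and the final division check is of the same order. Since a prime drawn from $B$ is good with probability at least $\tfrac12$, the expected number of draws needed to collect enough good reductions exceeds the minimum by only a constant factor, yielding an expected bit complexity of $\sOB(d^2+d\tau)$.

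The delicate step, and the main obstacle, is making the Las-Vegas guarantee truly rigorous: one must argue that the degree-filter correctly discards bad primes (so that reconstruction is performed only on mutually consistent modular images) and that the final verification never accepts an incorrect candidate. Both checks are classical but require a careful analysis of Mignotte-type coefficient bounds and of the resultant-vanishing conditions characterising good primes, after which the expected complexity bound follows directly from the estimates above.
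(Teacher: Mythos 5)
The paper does not prove this lemma; it is a direct citation of \cite[Cor.~11.11]{vzGGer2}, so there is no ``paper's own proof'' to compare against. Your sketch of the Las-Vegas modular gcd algorithm---Mignotte bound to size the answer, random small primes filtered by the smallest observed modular-gcd degree (bad primes being the divisors of a nonzero subresultant coefficient of bitsize $\OO(d\tau)$), half-gcd over $\Z_\mu$, CRT reconstruction, and a final trial-division certificate---is precisely the algorithm behind that corollary, and your cost accounting (remainder tree, $\sOB(d)$ per modular gcd over a prime of bitsize $\OO(\log d\tau)$, CRT of a degree-$d$ polynomial with $\OO(d+\tau)$-bit coefficients, probability at least $\tfrac12$ of a good prime) correctly yields the stated expected bound $\sOB(d^2+d\tau)$. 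The only point you leave implicit is the normalisation of the modular images prior to CRT (e.g.\ scaling each $\gcd_\mu$ by $\phi_\mu\bigl(\gcd(\mathrm{lc}(f),\mathrm{lc}(g))\bigr)$ so that the reconstructed coefficients are well defined), which is a standard fix and does not affect the complexity. No genuine gap.
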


\begin{lemma}\label{lem:sep-elem-comp-exp}
Given  $P,Q \in \Z[x,y]$ of total degree at most $d$ and maximum bitsize $\tau$,
the Las-Vegas version of Algorithm~\ref{alg:sep-elem}  computes a separating linear form $x+ay$ for $\{P,Q\}$ with
$a<2d^4$ with an expected bit complexity in  $\sOB(d^4+d^3{\tau})$. 
\end{lemma}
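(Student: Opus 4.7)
The plan is to reuse the correctness argument and most of the complexity analysis of Proposition~\ref{prop:sep-elem-comp} verbatim, since the halting test at Line~\ref{alg:sep9} certifies the output regardless of how candidates $a$ are produced. Consequently only two things change in the Las-Vegas version: the expected number of loop iterations, and (possibly) the cost of the squarefree step at Line~\ref{alg:sep7}.

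First I would bound the probability that a uniformly random $a\in[0,4d^4]$ fails the test of Line~\ref{alg:sep9}. Such an $a$ is bad iff either $\Upsilon_\mu(a)=0$ or $x+ay$ is not separating for $\{\phi_\mu(P),\phi_\mu(Q)\}$. In $\Z_\mu$, the first condition excludes at most $\deg_s\Upsilon_\mu\leq 2d$ residues, and the second excludes the $\leq\binom{d^2}{2}$ slopes of the lines through pairs among the (at most $d^2$) mod-$\mu$ solutions. As the footnote accompanying the algorithm description observes, the total count of bad integers in $[0,4d^4]$ is at most $2d^4$, so each draw is good with probability at least $\tfrac12$ and the expected number of iterations is at most $2$.

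Second I would verify that each iteration still costs $\sOB(d^3)$ bit operations, exactly as in the deterministic proof: the sheared polynomials $P_\mu(t-ay,y)$, $Q_\mu(t-ay,y)$ and their resultant $R_{\mu,a}(t)$ all fit in $\sOB(d^3)$ by Lemma~\ref{complexity:subresultant} applied over $\Z_\mu$ (coefficients of bitsize $\sO(\log d)$), and the squarefree part of $R_{\mu,a}$, of degree $O(d^2)$ in $t$, likewise fits in $\sOB(d^3)$ by Lemma~\ref{complexity:gcd} or, equivalently, by the Las-Vegas gcd of Lemma~\ref{lem:expected-gcd}. Combined with the $O(1)$ expected number of iterations, the whole loop contributes $\sOB(d^3)$ in expectation.

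Finally, the preprocessing at Lines~\ref{alg:sep1}--\ref{alg:sep3} is unchanged from the deterministic case and costs $\sOB(d^4+d^3\tau)$, dominated by the shearing at Line~\ref{alg:sep1}; this absorbs the expected loop cost and yields the claimed bound $\sOB(d^4+d^3\tau)$. The main (essentially only) subtlety is the careful counting of bad residues modulo $\mu$ and their lift to the sampling interval $[0,4d^4]$; once this is in place, the rest is a direct transcription of the deterministic analysis.
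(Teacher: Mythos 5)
Your proof is correct and follows the paper's argument essentially verbatim: reuse the iteration cost $\sOB(d^3)$ and preprocessing cost $\sOB(d^4+d^3\tau)$ from Proposition~\ref{prop:sep-elem-comp}, and bound the expected number of iterations by $2$ using exactly the same count of $\le 2d^4$ bad integers (degree $\le 2d$ of $\Upsilon_\mu$ plus $\binom{d^2}{2}$ bad directions) in the interval $[0,4d^4]$. The only nit is your aside that the squarefree step at Line~\ref{alg:sep7} could ``equivalently'' be done via the Las-Vegas gcd of Lemma~\ref{lem:expected-gcd}: that lemma is for polynomials over $\Z$, whereas Line~\ref{alg:sep7} computes a gcd over the finite field $\Z_\mu$ where the deterministic Lemma~\ref{complexity:gcd} already suffices at cost $\sOB(d^3)$; this remark is unnecessary but harmless.
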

\begin{proof}
  In the proof of Proposition~\ref{prop:sep-elem-comp}, we proved that the complexity of the algorithm
  is $\sOB(d^4+d^3\tau)$ plus $\sOB(d^3)$ times the number of considered choices of integer $a$. As
  argued above, at most two candidate integers are considered on average, which yields the lemma.
\end{proof}

\begin{lemma}\label{lem:lem:tridec_pair-exp}
Given  $P,Q \in \Z[x,y]$ of total degree at most $d$ and maximum bitsize $\tau$,
Algorithm~\ref{alg:tri-dec-mod2} computes the degree of the triangular decomposition of $\{P,Q\}$ 
with an expected bit complexity in $\sOB(d^5+d^4{\tau})$. 
\end{lemma}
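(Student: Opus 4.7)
My plan is to retrace the proof of Lemma~\ref{lem:tridec_pair}, keeping the correctness argument unchanged and re-analyzing the complexity with the Las-Vegas gcd cost of Lemma~\ref{lem:expected-gcd} in place of the deterministic univariate gcd bound over $\Z$. Since Algorithm~\ref{alg:tri-dec-mod2} is identical in the two versions except that each $\gcd$ over $\Z[x]$ is replaced by its Las-Vegas counterpart, correctness follows verbatim from Lemma~\ref{lem:tridec-correctness} as in Lemma~\ref{lem:tridec_pair}.

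For the complexity, I would decompose the algorithm into three stages. First, compute the sequence of principal subresultant coefficients $sres_{y,i}(P,Q)$; by Lemma~\ref{complexity:subresultant} with $n=1$ and $d_y\leq d$, this costs $\sOB(d^4\tau)$ bit operations and produces polynomials in $\Z[x]$ of degree in $\OO(d^2)$ and bitsize in $\sO(d\tau)$, and in particular the resultant $R=sres_{y,0}(P,Q)$ inherits these bounds. Second, compute $G_0$ as the squarefree part of $R$ via $R/\gcd(R,R')$, which by Lemma~\ref{lem:expected-gcd} costs an expected $\sOB(d^4+d^3\tau)$ bit operations. Third, perform the $\OO(d)$ iterations of the loop, each reducing to one Las-Vegas $\gcd$ in $\Z[x]$ between $G_{i-1}$ and $sres_{y,i}(P,Q)$.

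The bottleneck is this third stage. Each $G_{i-1}$ divides $G_0$, hence $R$, so by Mignotte's bound it has degree in $\OO(d^2)$ and bitsize in $\sO(d^2+d\tau)$; the other input $sres_{y,i}$ has degree in $\OO(d^2)$ and bitsize in $\sO(d\tau)$. Applying Lemma~\ref{lem:expected-gcd} to each such pair yields expected $\sOB((d^2)^2+d^2(d^2+d\tau))=\sOB(d^4+d^3\tau)$ bit operations per gcd, and summing over the $\OO(d)$ iterations gives $\sOB(d^5+d^4\tau)$, which dominates the two earlier stages and therefore bounds the overall expected bit complexity. The main subtle point I anticipate is controlling the coefficient size of the intermediate $G_i$: rather than tracking coefficient growth through successive gcds, I would observe once and for all that every $G_i$ divides $R$, so that a single invocation of Mignotte's inequality provides the uniform bitsize bound plugged into Lemma~\ref{lem:expected-gcd}.
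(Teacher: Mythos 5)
Your proof is correct and follows essentially the same route as the paper: compute the principal subresultant sequence by Lemma~\ref{complexity:subresultant}, then bound each of the $O(d)$ Las-Vegas gcd computations via Lemma~\ref{lem:expected-gcd} combined with Mignotte's inequality to get $\sOB(d^4+d^3\tau)$ per gcd. The only difference is that you spell out explicitly that each $G_i$ divides $G_0$ (hence $R$), which justifies the Mignotte bound on the intermediate bitsizes, a point the paper leaves implicit.
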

\begin{proof}
According to Lemma~\ref{complexity:subresultant}, the sequence of the principal subresultant
coefficients $sres_{i,y}(P,Q)$, $i=0,\ldots,d$ can be computed in  $\sOB(d^4\tau)$ bit operations,
and each of these principal subresultants (including the resultant) has degree $\OO(d^2)$ and bitsize
$\sO(d\tau)$. The algorithm then performs  at most $d$ gcd computations between these polynomials
(including the computation of the squarefree part of the resultant). By Lemma~\ref{lem:expected-gcd}
and using Mignotte's bound~\cite[Cor. 10.12]{BPR06}, each of these gcds can be computed in  an
expected bit complexity $\sOB(d^4+d^3\tau)$. Hence computing $d$ such
gcds can be done with expected bit complexity $\sOB(d^5+d^4\tau)$, which concludes the proof.
\end{proof}

\begin{lemma}\label{lem:proof-correctness-exp}
Given  $P,Q \in \Z[x,y]$ of total degree at most $d$ and maximum bitsize $\tau$,
  Algorithm \ref{alg:rad-tri-dec} 
computes the number of critical points of $H$ 
with an expected bit complexity in $\sOB(d^5+d^4{\tau})$. 
\end{lemma}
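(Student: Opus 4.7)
The correctness of Algorithm~\ref{alg:rad-tri-dec} is already established by Proposition~\ref{prop:proof-correctness} and is independent of whether the internal gcd computations use a deterministic or a Las-Vegas subroutine, so nothing needs to be said about correctness here. The only thing to verify is the expected bit complexity, obtained by re-analysing the two calls to Algorithm~\ref{alg:tri-dec-mod2} with the improved expected gcd bound of Lemma~\ref{lem:expected-gcd}, now packaged into Lemma~\ref{lem:lem:tridec_pair-exp}.

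The plan is to follow the structure of the proof of Proposition~\ref{prop:proof-correctness} verbatim, but to substitute the bounds of Lemma~\ref{lem:lem:tridec_pair-exp} for those of Lemma~\ref{lem:tridec_pair}. First I would recall that $\frac{\partial H}{\partial y}$ has total degree at most $d$ and bitsize $\sO(\tau)$ (differentiation only multiplies coefficients by an integer of bitsize $\OO(\log d)$), and that $(\frac{\partial H}{\partial y})^2$ has total degree at most $2d$ and bitsize $\sO(d+\tau)$; moreover it can be obtained from $\frac{\partial H}{\partial y}$ in bit complexity $\sOB(d^2\tau)$ by \cite[Cor. 8.28]{vzGGer2}, which is absorbed in the final bound.

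Algorithm~\ref{alg:rad-tri-dec} then invokes Algorithm~\ref{alg:tri-dec-mod2} on the two pairs $(H,\frac{\partial H}{\partial y})$ and $(H,(\frac{\partial H}{\partial y})^2)$, whose inputs have degrees bounded by $2d$ and bitsizes bounded by $\sO(d+\tau)$. Applying Lemma~\ref{lem:lem:tridec_pair-exp} with $d\leadsto 2d$ and $\tau\leadsto\sO(d+\tau)$, each of these two calls runs in expected bit complexity $\sOB((2d)^5+(2d)^4(d+\tau))=\sOB(d^5+d^4\tau)$. Summing the two contributions, and adding the cost of forming the derivative and its square, yields the announced overall expected bit complexity of $\sOB(d^5+d^4\tau)$.

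The step that carries the weight is really the gcd step inside Algorithm~\ref{alg:tri-dec-mod2}, since this is where the deterministic $\sOB(d^7+d^6\tau)$ bound of Lemma~\ref{lem:tridec_pair} is replaced by the probabilistic $\sOB(d^5+d^4\tau)$ bound of Lemma~\ref{lem:lem:tridec_pair-exp}; once that lemma is in hand the remainder is bookkeeping on the degree and bitsize of the derivatives of $H$. The only minor subtlety to be careful about is that the bitsize of $(\frac{\partial H}{\partial y})^2$ grows to $\sO(d+\tau)$ rather than $\sO(\tau)$, but since the bound of Lemma~\ref{lem:lem:tridec_pair-exp} is already in $\sOB(d^5+d^4\tau)$, this extra $d$ factor is harmless.
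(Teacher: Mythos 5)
Your proposal is correct and takes essentially the same route as the paper: bound the degree and bitsize of $\frac{\partial H}{\partial y}$ and $(\frac{\partial H}{\partial y})^2$, then invoke Lemma~\ref{lem:lem:tridec_pair-exp} for the two calls to Algorithm~\ref{alg:tri-dec-mod2}. The extra care you take about the bitsize of $(\frac{\partial H}{\partial y})^2$ growing to $\sO(d+\tau)$ is exactly the point the paper absorbs by stating the bitsize bound as $O(d+\tau)$ up front.
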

\begin{proof}
As seen in the proof of Proposition~\ref{prop:proof-correctness}, Algorithm \ref{alg:rad-tri-dec}
computes $\frac{\partial H}{\partial y}$ and $(\frac{\partial H}{\partial y})^2$ in bit complexity
$\sOB(d^2\tau)$ and calls Algorithm~\ref{alg:tri-dec-mod2} on two systems of degrees $O(d)$ and
bitsizes $O(d+\tau)$. The result follows from Lemma~\ref{lem:lem:tridec_pair-exp}.
\end{proof}

\begin{lemma}\label{lem:lucky-exp}
Given $H \in \Z[x,y]$ of total degree $d$ and bitsize $\tau$, 
the Las-Vegas version of Algorithm~\ref{algo:lucky} computes a lucky prime  for $\{H,\frac{\partial H}{\partial y}\}$ with an expected bit complexity in  $\sOB(d^5+d^4\tau)$.
\end{lemma}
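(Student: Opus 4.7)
The plan is to mimic the complexity analysis of Proposition~\ref{prop:comp-lucky}, replacing each call to Algorithm~\ref{alg:rad-tri-dec} by its expected bit complexity given in Lemma~\ref{lem:proof-correctness-exp} and replacing the deterministic enumeration over all primes in $B$ by the Las-Vegas random sampling strategy. Correctness is inherited from Proposition~\ref{prop:comp-lucky}, since sampling a prime from $B$ does not change the fact that a lucky prime in $B$ always yields a correct answer; only termination becomes probabilistic.

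First I would bound the cost of the deterministic preprocessing. Line~\ref{line-distinct} computes $N$ by calling Algorithm~\ref{alg:rad-tri-dec} on $H\in\Z[x,y]$; by Lemma~\ref{lem:proof-correctness-exp} this has expected bit complexity $\sOB(d^5+d^4\tau)$, which will turn out to be the dominant term. The sheared polynomials $H(t-sy,y)$ and $\frac{\partial H}{\partial y}(t-sy,y)$ at Line~\ref{alg:nbrootZ1} are computed in $\sOB(d^4+d^3\tau)$ bit operations, as in the proof of Proposition~\ref{prop:comp-lucky}. Computing a set $B$ of $2m$ primes, with $m\in\sO(d^4+d^3\tau)$ an upper bound on the number of unlucky primes (from~\cite[Prop. 13]{bouzidiJSC2014a}), of bitsize $\OO(\log d\tau)$, also fits in $\sOB(d^4+d^3\tau)$.

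Next I would analyze one iteration of the random sampling loop for a single prime $\mu$ drawn from $B$. Since $\mu$ has bitsize $\OO(\log d\tau)$ and the polynomials $H$, $\frac{\partial H}{\partial y}$, $L_H$, $L_{\partial H/\partial y}$ have $O(d^2)$ coefficients of bitsize $\sO(d+\tau)$, reducing them modulo the single prime $\mu$ (instead of all primes of $B$ via a remainder tree) costs $\sOB(d^3+d^2\tau)$ bit operations. Testing the nonvanishing condition of Line~\ref{alg:nbrootZ4} amounts to evaluating two univariate polynomials over $\Z_\mu$ at the scalar encoding the shear, at cost $\sOB(d)$. Finally, the call to Algorithm~\ref{alg:rad-tri-dec} over $\Z_\mu$ in Line~\ref{alg:nbrootZ5} has arithmetic complexity $\sO(d^3)$ by Proposition~\ref{prop:proof-correctness}, and since $\mu$ has bitsize $\OO(\log d\tau)$, its bit complexity is $\sOB(d^3)$. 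Thus one iteration costs $\sOB(d^3+d^2\tau)$.

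The key probabilistic step is to argue that at most $2$ iterations are expected. By construction of $B$, at least half of its elements are lucky primes, so each independent uniform draw from $B$ satisfies the conditions of Lines~\ref{alg:nbrootZ4} and~\ref{alg:nbrootZ6} with probability at least $\tfrac12$; therefore the number of trials before success is stochastically dominated by a geometric variable with mean at most $2$. Multiplying the per-iteration cost by this expected number leaves the loop contribution at $\sOB(d^3+d^2\tau)$. Summing the preprocessing cost, the shearing cost, the construction of $B$, and the expected loop cost, the dominant term is the $\sOB(d^5+d^4\tau)$ bound from Line~\ref{line-distinct}, which yields the claim. The only step requiring more than a direct substitution is the probabilistic termination argument above, and it is the same two-lines-of-counting argument already used in the $\{P,Q\}$-setting for the Las-Vegas version of Algorithm~\ref{alg:sep-elem}.
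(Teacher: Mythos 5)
Your proof follows the same structure as the paper's: it charges $\sOB(d^5+d^4\tau)$ to the initial call to Algorithm~\ref{alg:rad-tri-dec} via Lemma~\ref{lem:proof-correctness-exp}, bounds the shearing, prime generation, and per-iteration reductions and $\Z_\mu$-calls by subdominant costs, and uses the fact that at least half of $B$ is lucky to bound the expected number of trials by $2$; the conclusion and all dominant bounds are correct. One immaterial inaccuracy: the test in Line~\ref{alg:nbrootZ4} is whether $\phi_\mu(L_H(s))$ and $\phi_\mu(L_{\partial H/\partial y}(s))$ vanish \emph{identically} as polynomials in $s$ over $\Z_\mu$, not an evaluation at a scalar, but this does not change the $\sOB(d)$ bound or anything downstream.
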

\begin{proof}
By Lemma~\ref{lem:proof-correctness-exp}, the first call to Algorithm \ref{alg:rad-tri-dec} has
an expected bit complexity in $\sOB(d^5+d^4{\tau})$. As shown in the proof of Proposition~\ref{prop:comp-lucky}, the
bit complexity of shearing $H$ and $\frac{\partial H}{\partial y}$, as well as the reductions modulo
$\mu$ have  bit complexity $\sOB(d^4+d^3{\tau})$. Finally, by Proposition~\ref{prop:proof-correctness}, the calls  in $\Z_\mu$ to Algorithm
\ref{alg:rad-tri-dec} have arithmetic complexity $\sO(d^3)$ and thus bit complexity $\sOB(d^3)$
(since $\mu\in \OO(\log d\tau)$). This concludes the proof since, as discussed above,  the expected number of such calls is
at most 2.
\end{proof}

Combining the above results, we obtain the following theorem.

 \begin{theorem}\label{thm:las-vegas}
Let $P,Q \in \Z[x,y]$ of total degree at most $d$ and maximum bitsize $\tau$. A separating linear form $x+ay$ for $\{P,Q\}$ with $a$ an integer of bitsize in $\OO(\log d)$  can be computed using an expected number of $\sOB(d^5+d^4\tau)$ bit operations.
\end{theorem}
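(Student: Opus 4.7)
The plan is to execute the same three-stage pipeline used to prove Theorem~\ref{thm:final}, replacing each deterministic subroutine by its Las-Vegas counterpart developed in this section. Starting from $\{P,Q\}$, the first stage reduces the problem to the curve setting via Lemma~\ref{lem:system2curve}, producing a shearing $t=x+\alpha y$ with $\alpha\in[0,d]$ and a polynomial $H\in\Z[t,y]$ of degree $\OO(d)$ and bitsize $\sO(d+\tau)$ such that $\{H,\partial H/\partial y\}$ is zero-dimensional. The second stage applies the Las-Vegas version of Algorithm~\ref{algo:lucky} to $H$, yielding by Lemma~\ref{lem:lucky-exp} a lucky prime $\mu$ for $\{H,\partial H/\partial y\}$ together with its number $N$ of distinct critical points in expected bit complexity $\sOB(d^5+d^4\tau)$. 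The third stage runs the Las-Vegas version of Algorithm~\ref{alg:sep-elem} on the input $(H,\partial H/\partial y,\mu,N)$; by Lemma~\ref{lem:sep-elem-comp-exp} this outputs in expected $\sOB(d^4+d^3\tau)$ an integer $a<2d^4$ such that $t+ay$ separates $\{H,\partial H/\partial y\}$. Undoing the shearing returns the separating form $x+(\alpha+a)y$ for $\{P,Q\}$, with $\alpha+a$ of bitsize $\OO(\log d)$.

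The main obstacle is to make the first stage fit within the expected bound $\sOB(d^5+d^4\tau)$, since Lemma~\ref{lem:system2curve} only delivers a deterministic $\sOB(d^6+d^5\tau)$ bound. The excess cost is concentrated in the univariate squarefree-part computations of $P$ and $Q$. I would discharge this by substituting for the deterministic squarefree routine of \cite{sagraloff2013} a Las-Vegas variant built on Lemma~\ref{lem:expected-gcd}: extracting the squarefree part reduces to gcd computations on polynomials of degree $\OO(d^2)$ and bitsize $\sO(d+\tau)$ (using Mignotte's bound, cf.~\cite[Cor. 10.12]{BPR06}), each achievable in expected $\sOB(d^2+d\tau)$ by Lemma~\ref{lem:expected-gcd}, so that the whole squarefree stage runs in expected $\sOB(d^5+d^4\tau)$. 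The remaining operations of Lemma~\ref{lem:system2curve} (searching for an integer $\alpha\in[0,d]$ at which the $y$-leading coefficient does not vanish and then expanding $H(t-\alpha y,y)$) are already within $\sOB(d^4+d^3\tau)$ in the deterministic analysis and carry over verbatim.

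Summing the three expected contributions, $\sOB(d^5+d^4\tau)$ for the shearing reduction, $\sOB(d^5+d^4\tau)$ for the lucky-prime stage, and $\sOB(d^4+d^3\tau)$ for the separating-form stage, one obtains the claimed overall expected bit complexity $\sOB(d^5+d^4\tau)$, and the bound on $a$ follows from $\alpha\in\OO(d)$ together with $a<2d^4$.
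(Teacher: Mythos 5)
Your proposal matches the paper's proof essentially line for line: the same three-stage reduction, the same appeal to Lemmas~\ref{lem:sep-elem-comp-exp} and \ref{lem:lucky-exp} for the two Las-Vegas stages, and the same observation that the only remaining deterministic $\sOB(d^6+d^5\tau)$ cost sits in the squarefree-part computation of Lemma~\ref{lem:system2curve}, which is discharged by substituting the Las-Vegas gcd of Lemma~\ref{lem:expected-gcd} into the routine of \cite[Lemma 13]{sagraloff2013}. The only blemish is in your intermediate bookkeeping for the internal gcds of that routine (on degree-$\OO(d^2)$ inputs Lemma~\ref{lem:expected-gcd} gives a per-gcd cost starting at $\sOB(d^4)$, not $\sOB(d^2+d\tau)$, and the relevant bitsize is $\sO(d\tau)$ rather than $\sO(d+\tau)$), but this does not affect the final $\sOB(d^5+d^4\tau)$ bound you claim for the squarefree stage, which is exactly what the paper asserts.
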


\begin{proof}
As in the proof of Theorem~\ref{thm:final}, Lemmas~\ref{lem:sep-elem-comp-exp},
\ref{lem:proof-correctness-exp} and \ref{lem:lucky-exp} yield the result once we prove, as in
Lemma~\ref{lem:system2curve}, that we can compute with the right complexity a
\blue{shearing}
 $t=x+\alpha y$, $\alpha\in O(d)$, and a polynomial $H \in \Z[t,y]$ of total degree at most
$2d$ and bitsize  $\sO(d+\tau)$ such that $\{H,\frac{\partial H}{\partial y}\}$ is
zero-dimensional and such that, if $t+ay$ is separating for $\{H,\frac{\partial H}{\partial y}\}$, then 
$x+(\alpha+a)y$ is separating for $\{P,Q\}$.
 
According to the proof of Lemma~\ref{lem:system2curve}, we only need to prove that the computation of the squarefree part of $H=P\,Q$ can be done with an expected bit complexity in $\sOB(d^5+d^4\tau)$. Replacing in the proof of \cite[Lemma 13]{sagraloff2013} the bit complexity of computing a univariate gcd by the one in Lemma~\ref{lem:expected-gcd} yields the result.  
\end{proof}

\section{Conclusion}

This paper focuses on the computation of separating linear forms for bivariate systems.  First, we
proved that the computation of such a separating form can be done with a bit complexity
$\sOB(d^7+d^6\tau)$ in the worst case.  As mentioned in the introduction, this result directly
yields, within the same worst-case bit complexity, the rational parameterization of Gonzalez-Vega et
al. \cite{VegKah:curve2d:96,det-jsc-2009} and 
 that of Rouillier
\cite{Rou99,bouzidiJSC2014a}. 
  Second, we proved that the computation of a separating linear
form can be done in a Las-Vegas setting using an expected number of $\sOB(d^5+d^4\tau)$ bit
operations.  As a consequence, the computation in this setting of a separating linear form now becomes
non-dominant in the whole process of computing a rational parameterization; \blue{indeed, given a
separating linear form, computing Gonzalez-Vega et al. and 
Rouillier's parameterizations both have bit
complexity in $\sOB(d^7+d^6\tau)$ even in the Las-Vegas setting.

It should be mentioned that the best known upper bound for the total bitsize of the parameterization
of Gonzalez-Vega et al. is $\sO(d^5+d^4\tau)$.\footnote{\small Indeed, the approach of Gonzalez-Vega
  et al. first applies a linear change of variables to the input polynomials, which increases the
  bitsize of the polynomials to $\tau'\in\sO(d+\tau)$, and then computes rational parameterizations
  of the solutions of the $O(d)$ systems of the triangular decomposition
  (Algorithm~\ref{alg:tri-dec-mod}). The rational parameterizations are ratios of coefficients of
  the polynomial subresultants (seen as polynomials in $y$) which have degrees $O(d^2)$ and bitsize
  $\sO(d\tau')=\sO(d^2+d\tau)$ (Lemma~\ref{complexity:subresultant}). The total bitsize of the
  $O(d)$ parameterizations is thus $\sO(d^5+d^4\tau)$.}  Thus, some progress on this upper bound
would be required before any further progress on the computation of a separating linear form in the
Las-Vegas setting could impact that of computing this parameterization.  However, note that the
situation is slightly different for the Rational Univariate Representation (RUR) of Rouillier
\cite{Rou99} whose total bitsize is $\sO(d^4+d^3\tau)$~\cite[Theorem 22]{bouzidiJSC2014a}.

Finally, we note that, for computing a separating linear form of an
\emph{arbitrary} system $\{P,Q\}$,  the algorithm presented here is likely purely theoretical
because considering the system $\{PQ,\frac{\partial PQ}{\partial y}\}$ instead $\{P,Q\}$ essentially doubles the
degree of the input polynomials, which is likely not efficient in practice. However, for the problem
of computing the critical points of a curve, there is some
good hope that our  algorithm is efficient in practice.}

\small
\bibliographystyle{abbrv}
 \bibliography{bib-algcurves}

\end{document}